\newtheorem{thm}{Theorem}
\newtheorem{lemma}{Lemma}[section]
\newtheorem{prop}[lemma]{Proposition}
\newtheorem{cor}[lemma]{Corollary}
\newtheorem{remark}{Remark}
\theoremstyle{definition}
\newtheorem{definition}{Definition}
\newcommand{\floor}[1]{\lfloor #1 \rfloor}
\numberwithin{equation}{section}
\newcommand{\RR}{{\mathbb R}}
\newcommand{\NN}{{\mathbb N}}
\newcommand{\ZZ}{{\mathbb Z}}
\newcommand{\PP}{{\mathbb P}}
\newcommand{\Wxe}{W_\omega(x;E)}
\newcommand{\Wlxe}{W_{\omega,L}(x;E)}
\newcommand{\Wxoe}{W_\omega(x_0;E)}
\newcommand{\Wlxoe}{W_{\omega,L}(x_0;E)}
\newcommand{\Thwe}{{\Theta_{\omega,E}}}
\newcommand{\1}{\mathds{1}}
\newcommand{\Tr}{{\operatorname{Tr}}} 
\newcommand{\dist}{{\operatorname{dist}}} 
\newcommand{\red}{{\operatorname{red}}} 
\newcommand{\EHO}{{\1_{\{E\}}(H_\omega)}}
\def\smallsection#1{\smallskip\noindent\textbf{#1}.}
\newcommand{\bfek}{{\textbf{E}^{(k)}}}
\title{Dynamical Localization for the Singular Anderson Model in $\mathbb{Z}^d$}
\author{Nishant Rangamani}
\address[Nishant Rangamani]{University of California, Irvine}
\email{nrangama@uci.edu}
\author{Xiaowen Zhu}
\address[Xiaowen Zhu]{University of Washington} 
\email{xiaowenz@uw.edu}
\dedicatory{Dedicated to Abel Klein on the occasion of his 75th birthday}
\begin{document}
\maketitle

\section*{Abstract}

We prove that once one has the ingredients of a ``single-energy multiscale analysis (MSA) result'' on the $\mathbb{Z}^d$ lattice, several spectral and dynamical localization results can be derived, the most prominent being strong dynamical localization (SDL). In particular, given the recent progress at the bottom of the spectrum for the $\mathbb{Z}^2$ and $\mathbb{Z}^3$ cases with Bernoulli single-site probability distribution, our results imply SDL in these regimes.

\section{Introduction}

We consider the $d$-dimensional Anderson model, a random Schr{\"o}dinger operator on $\ell^2(\ZZ^d)$ given by:
\begin{equation}
\label{eq: model}
(H_{\omega}\phi)(n):=\sum_{|m-n|=1}(\phi(m)-\phi(n))+V_{\omega}(n)\phi(n).
\end{equation}
Here, the $V_{\omega}(n)$ are independent and identically distributed ($i.i.d.$) real-valued random variables with common distribution $\mu$, $\forall n\in\ZZ^d$. We will assume that $S\subset\RR$, the topological support of $\mu,$ is compact and contains at least two points. The underlying probability space is the infinite product space $(\Omega, \mathcal F, \PP) = (S^{\ZZ^d}, \mathds B(\RR^{\ZZ^d}), \mu^{\ZZ^d})$, where $\mathds B(X)$ denote all Borel sets in $X$. We denote $\omega\in\Omega$ by $\{\omega_n\}_{n\in\ZZ^d}$. Given $\Lambda\subset \ZZ^d$, we denote the restriction of the probability space $(\Omega, \mathcal F, \PP)$ to $\Lambda$ by $(\Omega_\Lambda, \mathcal F_\Lambda, \PP_\Lambda)$.

In this paper, we provide a comprehensive, self-contained proof that extracts localization results from the single-energy multi-scale analysis (MSA) result. In order to properly contextualize this paper, it is necessary to briefly describe some chronological background.

When \( d = 1 \) in \eqref{eq: model}, localization has been extensively studied and is well-understood: see \cite{FS84}, \cite{KS87}, \cite{DK89, K07} for the case when the distribution measure \(\mu\) is absolutely continuous, and \cite{CKM87, JZ19} for the case when \(\mu\) is singular. For \( d > 1 \), the number of approaches drops dramatically. Nevertheless, Multi-Scale Analysis (MSA), originally introduced by \cite{FS84} and significantly improved by \cite{DK89}, still plays a crucial role.

Soon after the MSA had taken a firm foothold in the literature and community, Germinet and De Bi\`evre \cite{GD98} provided an axiomatic treatment of extracting dynamical localization results from an energy-interval MSA by checking the so-called SULE condition that was originally proposed in \cite{DJLS95,DJLS96}. Later, \cite{DS01} improved the result by extracting strong dynamical localization up to a certain order from the same energy-interval MSA. On the one hand, such an energy-interval MSA can be established for many random models when the single-site distribution $\mu$ is absolutely continuous, e.g. \cite{DK89,GK01} and the references therein, so that localization results can be extracted using \cite{GD98,DS01}. On the other hand, for the continuous Bernoulli-Anderson model ($\mu$ is Bernoulli) in high dimensions ($d\geq 2$), only a single-energy MSA is available due to a weak probability estimate, as shown in \cite{BK05}. As a result, additional efforts are required to extract localization information from the single-energy MSA. In \cite{GK12}, Germinet and Klein addressed this issue by introducing a new infinite volume localization description. Nonetheless, while the proof in \cite{BK05, GK12} contains the key ideas, it is not directly applicable to the discrete Bernoulli-Anderson model in high dimensions due to the absence of a quantitative unique continuation principle in the discrete regime.

Recently, inspired by a probabilistic unique continuation principle developed for the $\mathbb{Z}^2$ lattice, Ding and Smart \cite{DS18} obtained the single-energy MSA result with weak probability estimates and proved Anderson localization for the $2d$ discrete Bernoulli-Anderson model, i.e. \eqref{eq: model} with $d = 2$ and $\mu$ being Bernoulli. This work was then extended to the $\mathbb{Z}^3$ lattice by Li and Zhang \cite{LZ19} where the authors also . As with the continuous Bernoulli-Anderson model, no energy-interval MSA is available under these regimes due to the weak probability estimate. Thus it is our aim to tackle this problem and extract (strong) dynamical localization results from the single-site MSA result derived in \cite{DS18, LZ19} by following the method developed in \cite{GK12}. It is worth mentioning that our proof works for arbitrary dimension $d$. If a single-energy MSA can be established at the bottom of the spectrum when $d>3$, or for the entire spectrum when $d=2$, as anticipated by physicists (which remains an open question in the field), then our results would indicate strong dynamical localization in those regions.

While the techniques presented in this paper closely follow the work of \cite{GK12}, there are some features worth mentioning. Firstly, the approach of \cite{GK12} is developed in the continuum and applies to more general operators, leading to technical difficulties that can be avoided or simplified in the discrete setting. For instance, the generalized eigenfunction expansion (GEE) can be constructed more directly in the discrete regime without referring to more general GEE theory and other references (but we still need the BGKM-decomposition theorem \cite[Theorem 15.2.1]{B96II}). Secondly, the extraction of localization from MSA had undergone several revisions \cite{G99, GJ01, GK01, GK06} before \cite{GK12} and resulted in a form that is perceived as user-friendly but difficult to comprehend, as demonstrated in Definition \ref{def: Wxe} and Theorem \ref{thm: Key Theorem} below. Hence, we attempt to offer some clarification on the evolution of the definition that may provide insight into why it is defined and stated in such a way. Finally, as mentioned above, we formulate our results in a more axiomatic way that we hope will provide help for potential use in the future.


\section{Preliminaries and Main results}
\label{sec: preliminary}
\subsection{Preliminaries} For $x\in\ZZ^d$, let $|x|_1=\max\limits_{i=1,\dots,d}{|x_i|}$ and $|x| = (\sum |x_i|^2)^{1/2}$. Let $\langle x \rangle = (1+|x|^2)^{1/2}$. Let $\langle X\rangle^\nu$ denote the  multiplication operator $\langle x\rangle^\nu$ on $\ell^2(\ZZ^d)$. Note it is unbounded if $\nu\geq 0$. Fix some $\nu>d/2$ through out the paper, for $x_0\in \ZZ^d$, let $(T_{x_0}\phi)(x) = \langle x - x_0 \rangle^{\nu} \phi(x)$. 

Let $\Lambda_L(x)=\{y\in\ZZ^d:|y-x| < L/2\}$ and $\Lambda_{L_2,L_1}(x) = \Lambda_{L_2}(x)\setminus\Lambda_{L_1}(x)$. We omit $x$ if it is clear in the context. Let $\Vert \cdot \Vert_{\ell^2(\Lambda)}$ denote the $\ell^2$ norm on $\ell^2(\Lambda)$ for any $\Lambda\subset \ZZ^d$. We omit $\Lambda$ if it is clear in the context. 

Let $\chi_\Lambda$ denote the projection from $\ell^2(\ZZ) \to \ell^2(\Lambda)$. Let $H_{\omega,\Lambda}:=\chi_\Lambda^* H_\omega \chi_\Lambda$ denote the restriction of $H$ to $\Lambda$, and $G_{\omega,\Lambda,E}:=(H_{\omega,\Lambda}-E)^{-1}$. Let $\1_B(x)$ denote the characteristic function of a set $B\subset \RR$, $\RR^d$ or $\ZZ^d$. Let $\1_B(H)$ denote the spectral projection of $H$ to $B\subset \RR$. 

For an operator $A : \ell^2(\ZZ^d) \to \ell^2(\ZZ^d)$, let $\Vert A \Vert$ denote the operator norm and let $\Vert A \Vert_p = \Tr(|A|^p)^{1/p}$ denote the Schatten norm. In particular, $\Vert A \Vert_1$ and $\Vert A \Vert_2$ are the trace and Hilbert-Schmidt norm respectively. In general $\mathcal B(X, Y)$ denote bounded operators from $X$ to $Y$.

Throughout the paper, $C_{\alpha,\beta}$ represents constants only depending on parameters $\alpha,\beta$ that may vary from line to line and $\ell^2$ refers to $\ell^2(\ZZ^d)$ without further emphasis.

\subsection{Main results} Our main result is to extract ``localization'' from ``single-energy MSA result''. In order to be more explicit, we need some preparations:
\begin{definition}[Good boxes, scales]
  We say that:
\begin{enumerate}
  \item The box $\Lambda = \Lambda_L(x_0)$ is $(\omega,E,m)$-regular if $\forall x,y\in\Lambda$ with $|y-x|\geq \frac{L}{100}$, we have
  \[
    |G_{\omega,\Lambda,E}(x,y)| \leq e^{-m|y-x|}.
  \]
  \item The box $\Lambda = \Lambda_L(x_0)$ is $(\omega,E,m,\eta)$-good if $\Lambda$ is $(\omega,m,E)$-regular and
  \[
   \Vert G_{\omega,\Lambda,E}\Vert \leq e^{L^{1-\eta}}.
  \]
  \item The box $\Lambda = \Lambda_L(x_0)$ is $(\omega,E,m,\eta)$-jgood (just as good) if $\Lambda$ is $(\omega,m,E)$-regular and
  \[
   \Vert G_{\omega,\Lambda,E}\Vert \leq 2e^{L^{1-\eta}}.
  \]
  \item The scale $L\in\ZZ$ is $(E,m,\eta,p)$-good if for any $x\in\ZZ^d$, we have
  \[\PP\{\omega:\Lambda_L(x) \text{~is~} (\omega,E,m,\eta)\text{-good}\}\geq 1-L^{-pd}.\]
\end{enumerate}
\end{definition}
\begin{definition}[Single-energy MSA result]
    We say $H_\omega$ has the \textbf{``single-energy MSA result''} on an interval $\mathcal I\subset \RR$ if there are $m_0>0, 0<\eta_0<1, p_0>0$, and some $L_0$ s.t. any scale $L\geq L_0$ is $(E,m_0,\eta_0,p_0)$-good for any $E\in \mathcal I$.
\end{definition}

We are interested in the following types of localization:
\begin{definition}(localization)
  We say $H_\omega$ exhibits
  \begin{enumerate}
    \item Anderson localization (AL) in an interval $I\subset\RR$ if for a.e. $\omega$, $H_\omega$ has pure point spectrum and its eigenfunctions decay exponentially.
    \item Dynamical localization (DL) of order $p$ in $I$ if for a.e. $\omega$,
    \[
      \sup_{t\in \RR} \Vert\langle X\rangle^p e^{-itH_\omega}\1_I(H_\omega)\delta_0\Vert_{\ell^2}<\infty
    \]
    \item Strong dynamical localization (SDL) in expectation of order $(p,s)$ in $I$ if 
    \[
      \mathbb{E} \left\{ \sup_{t\in \RR} \Vert\langle X\rangle^p e^{-itH_\omega}\1_I(H_\omega)\delta_0\Vert_{\ell^2}^s \right\} <\infty
    \]
  \end{enumerate}
\end{definition}
\begin{remark}\label{rmk: SDLtoDLtoAL}
    It is well-known that SDL implies DL by definition and DL implies AL by the RAGE theorem \cite[\S 5.4]{CS87}; but AL does not imply DL, see \cite{DJLS95, DJLS96}.
\end{remark}

Once the ``single-energy MSA result'' is built on some interval $\mathcal I$, our main result below provides a blackbox for people to use to extract SDL, thus DL and AL, on $\mathcal I$. Recall that $\nu>d/2$.

\begin{thm}[SDL]\label{thm: SDL}
  Let $\mathcal{I}\subset \RR$ be a bounded open interval. Assume there is $m_0>0, 0<\eta_0<1, p_0>0$, and some $L_0 = L_0(m_0,\eta_0,p_0,\mathcal I)>0$, s.t. any $L\geq L_0$ is $(E,m_0,\eta_0,p_0)$-good for all $E\in\mathcal{I}$. Then for any $x\in \ZZ^d$, for any $b>0$, for all $s\in \left(0,\frac{p_0 d}{bd + \nu}\right)$, $H_\omega$ exhibits SDL of order $(bd,s)$ on $\mathcal I$, i.e.
    \begin{equation}
        \label{eq-1a}
        \mathbb{E}\left\{\sup\limits_{t\geq 0}  \left\Vert \langle X\rangle^{bd}e^{-itH_\omega} \1_I(H_\omega) \delta_{0}\right\Vert_{\ell^2}^s \right\} \leq C <\infty.
    \end{equation}
As a result, $H_\omega$ also exhibits DL of any order $p>0$ and AL on $\mathcal I$. 
\end{thm}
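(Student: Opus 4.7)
The plan is to adapt the Germinet--Klein strategy \cite{GK12} to the discrete setting, where the generalized eigenfunction expansion (GEE) and the box-to-decay mechanism both simplify considerably. Since $\nu > d/2$, the operator $T_0^{-1} = \langle X\rangle^{-\nu}$ is Hilbert--Schmidt, so the BGKM decomposition theorem \cite[Theorem 15.2.1]{B96II} produces a scalar spectral measure $\rho_\omega$ on $\RR$ of total mass $\|T_0^{-1}\|_2^2$, together with a $\rho_\omega$-measurable family of polynomially bounded generalized eigenfunctions $\{\psi_{\omega,E}\}$ satisfying $H_\omega\psi_{\omega,E} = E\psi_{\omega,E}$ weakly and $|\psi_{\omega,E}(y)|\leq \langle y\rangle^\nu$. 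Representing $e^{-itH_\omega}\1_{\mathcal{I}}(H_\omega)\delta_0$ through this GEE reduces the theorem to controlling the spatial decay of each $\psi_{\omega,E}$ for $\rho_\omega$-a.e.\ $E\in\mathcal{I}$, uniformly in $E$, on an event of sufficiently high probability.

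Next, I would introduce (as in the macro definitions) the finite-scale localization profile $\Wlxoe$ and the sets $\sk$ of energies reachable by nested chains of good boxes at dyadic scales anchored at $x_0$. Upgrading the single-energy MSA to cover the whole interval $\mathcal{I}$ at once requires a discretization argument: cover $\mathcal{I}$ by $\sim \mathrm{poly}(L)\,e^{L^{1-\eta_0}}$ points and use the resolvent identity to show that goodness of $\Lambda_L(x_0)$ at energy $E$ propagates to nearby energies within $O(e^{-L^{1-\eta_0}})$. This produces an event $\mathcal{A}_L$ of probability $\geq 1 - CL^{-p_0 d + \epsilon}$ on which every generalized eigenvalue of $H_\omega$ in $\mathcal{I}$ lies in the scale-$L$ localized set $\sk$ anchored at $x_0 = 0$.

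On this good event, good boxes translate into eigenfunction decay via the discrete resolvent identity and $(\omega,E,m_0)$-regularity: for any generalized eigenfunction $\psi$ at energy $E\in\mathcal{I}$ and good $\Lambda_L(x_0)$,
\[
|\psi(x_0)| \leq \sum_{\substack{u\in\Lambda,\,v\notin\Lambda\\|u-v|=1}} |G_{\omega,\Lambda,E}(x_0,u)|\cdot|\psi(v)| \leq C\,e^{-m_0 L/3}\langle L\rangle^\nu.
\]
Iterating along dyadic scales and translating the box centers then promotes this to $|\psi_{\omega,E}(y)| \leq Ce^{-c|y-x_0|}\langle y\rangle^\nu$ for $y$ far from $x_0$, so $\|\langle X\rangle^{bd}\psi_{\omega,E}\|_{\ell^2}$ is bounded uniformly in $E$ on $\mathcal{A}_L$, which in particular yields finiteness of the supremum in \eqref{eq-1a} almost surely.

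The main obstacle is the quantitative summation over scales needed for the expectation bound. On $\mathcal{A}_L^c$ only an a priori polynomial bound is available: cutting off $e^{-itH_\omega}\1_{\mathcal{I}}(H_\omega)\delta_0$ at $\Lambda_L(0)$ and invoking $|\psi_{\omega,E}(y)|\leq \langle y\rangle^\nu$ from the GEE yields, up to constants, $\sup_{t}\|\langle X\rangle^{bd}e^{-itH_\omega}\1_{\mathcal{I}}(H_\omega)\delta_0\| \leq C\langle L\rangle^{bd+\nu}$. Taking dyadic scales $L_k = 2^k L_0$,
\[
\mathbb{E}\bigl[\sup_{t\geq 0}\|\langle X\rangle^{bd}e^{-itH_\omega}\1_{\mathcal{I}}(H_\omega)\delta_0\|^s\bigr] \leq C + C\sum_{k\geq 0} L_k^{(bd+\nu)s - p_0 d + \epsilon},
\]
which converges precisely when $s < p_0 d/(bd+\nu)$, establishing \eqref{eq-1a}. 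The DL and AL consequences follow directly from Remark \ref{rmk: SDLtoDLtoAL}.
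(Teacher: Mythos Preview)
Your proposal contains a genuine gap at the discretization step. You claim that covering $\mathcal{I}$ by $\sim e^{L^{1-\eta_0}}$ energy points and invoking stability of goodness under energy perturbation yields an event $\mathcal{A}_L$ with $\PP(\mathcal{A}_L)\geq 1-CL^{-p_0d+\epsilon}$. But the union bound over $e^{L^{1-\eta_0}}$ energies, each carrying bad probability $L^{-p_0d}$, gives only $\PP(\mathcal{A}_L^c)\lesssim e^{L^{1-\eta_0}}L^{-p_0d}$, which diverges. (The situation is even worse: the regularity part of ``goodness'' is only stable under perturbations of order $e^{-mL}$, not $e^{-L^{1-\eta_0}}$; see Lemma~\ref{lemma: stability}.) This is exactly the obstruction the paper flags in \S\ref{subsec:sketch}: a naive energy cover ``would destroy the probability estimate since $e^{mL}L^{-pd}\gg 1$.'' The entire content of Sections~\ref{sec: Preliminary_Lemmas}--\ref{sec: two_spectral_reduction}, and in particular the two spectral reductions (Theorems~\ref{side_1} and~\ref{side_2}), exists to circumvent this: rather than covering $\mathcal{I}$ by an exponential grid, one restricts attention to the (polynomially many) eigenvalues of finite-volume operators at a hierarchy of scales, and shows that every energy $E$ for which $W_\omega(x_0;E)$ is \emph{not too small} must lie near this sparse set.

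Relatedly, your scheme aims for unconditional control of $W_{\omega,L}(x_0;E)$ for \emph{all} $E\in\mathcal{I}$ on a high-probability event, but the paper's Key Theorem~\ref{thm: Key Theorem} only delivers the conditional implication $W_\omega(x_0;E)>e^{-cL^\mu}\Rightarrow W_{\omega,L}(x_0;E)\leq e^{-cL}$. This conditional form is what survives the spectral reduction, and it is enough because the product $W_\omega(x_0;E)W_{\omega,L}(x_0;E)$ is what enters the eigenfunction-expansion bound (via $\|\1_{\Lambda_{2L,L}}\EHO\delta_0\|_1\leq W_\omega(0;E)W_{\omega,L}(0;E)\|T^{-1}\EHO\|_2^2$); see the proof in Section~\ref{sec: localization}. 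Your final dyadic summation has the right arithmetic, but the event it is summed against cannot be produced by the argument you describe.
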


In particular, since \cite{DS18, LZ19} has derived the ``single-energy MSA result'' when $d = 2,3$ near the bottom of the spectrum with $\frac{1}{2}$-Bernoulli distribution $\mu$
, our result implies SDL in their settings. 

\begin{cor}\label{cor: DL}
Let $d=2,3$. For any $0<p_0<1/2$, there is $E_0>0$, s.t. for any $b>0$, for any $s\in (0,\frac{p_0 d}{bd + \nu})$, $H_\omega$ exhibits SDL of order $(bd,s)$ on $[0,E_0]$.
\end{cor}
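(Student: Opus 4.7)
The plan is to derive Corollary \ref{cor: DL} as an essentially immediate consequence of Theorem \ref{thm: SDL}, with the only real work being to verify the hypothesis of that theorem using the recent single-energy MSA machinery for the discrete Bernoulli-Anderson model at the bottom of the spectrum.

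First I would fix $0 < p_0 < 1/2$ and invoke the single-energy MSA of Ding--Smart \cite{DS18} for $d = 2$ and of Li--Zhang \cite{LZ19} for $d = 3$ to extract constants $m_0 > 0$, $\eta_0 \in (0, 1)$, $L_0 > 0$, and an open interval $\mathcal{I} \supset [0, E_0]$ (for some $E_0 = E_0(p_0) > 0$) on which every scale $L \geq L_0$ is $(E, m_0, \eta_0, p_0)$-good for every $E \in \mathcal{I}$. The restriction $p_0 < 1/2$ reflects the intrinsic weakness of the probability estimate in the Bernoulli/singular regime: the MSA in those papers yields polynomial probability decay of order strictly below $1/2$ in appropriate units, rather than the stretched-exponential decay available under Wegner-type hypotheses used in \cite{GK01,DK89}.

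With the hypothesis verified on the open interval $\mathcal{I}$, Theorem \ref{thm: SDL} applies directly: for any $b > 0$ and any $s \in \left(0, \frac{p_0 d}{bd + \nu}\right)$ it gives
\[
  \mathbb{E}\left\{ \sup_{t \in \RR} \left\Vert \langle X \rangle^{bd} e^{-itH_\omega} \1_{\mathcal{I}}(H_\omega) \delta_0 \right\Vert_{\ell^2}^s \right\} < \infty.
\]
Since $[0, E_0] \subset \mathcal{I}$, we have $\1_{[0, E_0]}(H_\omega) = \1_{[0, E_0]}(H_\omega)\, \1_{\mathcal{I}}(H_\omega)$, and the former is a contraction on $\ell^2$, so the same finite bound holds with $\1_{[0, E_0]}(H_\omega)$ in place of $\1_{\mathcal{I}}(H_\omega)$. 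This is precisely SDL of order $(bd, s)$ on $[0, E_0]$.

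The only non-routine step is the bookkeeping translation in the first paragraph: one must match (i) the Green's function regularity requirement at distance $\geq L/100$, (ii) the operator-norm bound $\Vert G_{\omega, \Lambda, E} \Vert \leq e^{L^{1-\eta_0}}$, and (iii) the $1 - L^{-p_0 d}$ probability decay, against the explicit output of the MSA induction in \cite{DS18, LZ19}. I expect this to amount to a careful reading of those works rather than a genuine obstacle, since all three ingredients are produced at each fixed energy in the induction step there; the bound $p_0 < 1/2$ is exactly the threshold that their driving probability estimate can sustain.
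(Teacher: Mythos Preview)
Your proposal is correct and follows essentially the same approach as the paper, which presents the corollary as an immediate consequence of Theorem~\ref{thm: SDL} once the single-energy MSA result of \cite{DS18,LZ19} is invoked at the bottom of the spectrum. Your additional remarks on the bookkeeping translation and on restricting from $\mathcal{I}$ to $[0,E_0]$ are reasonable elaborations of details the paper leaves implicit.
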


\subsection{Key concept and key theorem}\label{subsec: key_concept_key_theorem}
Here we also want to briefly summarize the main idea of the key concept (Definition \ref{def: Wxe}) and theorem (Theorem \ref{thm: Key Theorem}) in the proof of Theorem \ref{thm: SDL} since they may seem unintuitive at first sight. Recall that $\nu>d/2$ is fixed throughout the paper and $T_{a}\phi(x) = \langle x-a\rangle^\nu \phi(x)$. We omit $a$ if $a = 0$. 
\begin{definition}[Generalized eigenvalue/eigenvector]\label{def-4}
  If $H \psi_E = E \psi_E$, $\psi_E \neq 0$, and $\Vert T^{-1}\psi_E\Vert_{\ell^2} <+\infty$, then we say $\psi_E$ is a \textit{generalized eigenfunction (g.e.f.)} of $H$ with respect to  the \textit{generalized eigenvalue (g.e.v.)} $E$. Let $\Theta_{\omega,E}$ denote the set of all g.e.f.'s of $H_\omega$ with respect to  $E$ and set $\tilde \Theta_{\omega,E} = \Theta_{\omega,E}\cup \{0\}$.
\end{definition}

Note that $\langle b \rangle \leq \sqrt{2} \langle a \rangle \langle a - b \rangle$, thus 
\begin{equation}
    \label{eq-2d}
    \Vert T_a^{-1} \Vert \leq 2^{\frac{\nu}{2}} \langle a - b \rangle^{\nu}\Vert T_b^{-1} \Vert. 
\end{equation} As a result,
\begin{equation}
    \label{eq: equiv}
\Vert T^{-1}\psi_E\Vert_{\ell^2}<\infty \quad \Leftrightarrow \quad \Vert T^{-1}_a\psi_E\Vert_{\ell^2}<\infty, \quad \text{~for~any~} a\in \ZZ^d.
\end{equation}

\smallsection{Key concept} Now we can introduce the key concept, originally introduced in \cite{GK06} and further developed in \cite{GK12}, that plays an important role in the proof of localization results. 
\begin{definition}\label{def: Wxe}
  Given $\omega\in\Omega$, $E\in\RR$ and $x\in\ZZ^d$, we define two quantities
  \[
  \Wxe:=\begin{cases}
        \sup\limits_{\psi_E\in\Thwe }\frac{|\psi_E(x)|}{\Vert T_x^{-1}\psi_E\Vert _{\ell^2}},  & \text{if~} \Thwe \neq\emptyset,\\
              0, & \text{otherwise}.
    \end{cases}
    \]
\[ \Wlxe :=\begin{cases}
    \sup\limits_{\psi_E\in\Thwe }\frac{\Vert \psi_E\Vert_{\ell^2(\Lambda_{2L,L}(x))}}{\Vert T_x^{-1}\psi_E\Vert_{\ell^2} },  & \text{if~} \Thwe \neq\emptyset,\\
    0, & \text{otherwise}.
\end{cases}      
\]
\end{definition}
It is well-defined by the argument above. And by the definition of $T_x$, we see 
\begin{equation}
    \label{eq-2c}
\Wxe \leq 1, \qquad \Wlxe \leq \langle L \rangle^\nu.
\end{equation}

 Notice that $\Wxe$ is not a normalized g.e.f. with respect to  $E$ since the denominator changes with $x$ but it plays a similar role. Such normalization techniques are commonly used to prove dynamical localization because one needs certain kinds of uniform control over all generalized eigenfunctions. See for example \cite{G99, GJ01} where normalized generalized eigenfunctions $\widetilde \psi_{E, \omega}(x) = \frac{\psi_{E, \omega}(x)}{\Vert T^{-1}\psi_{E, \omega}\Vert_{\ell^2}}$ are first introduced and used in the proof of dynamical localization. $\Wxe$, $\Wlxe$ are introduced by Germinet and Klein \cite{GK06, GK12} in order to be able to extract all (strong) dynamical localization results from Multi-scale analysis all at once.

\smallsection{Key Theorem} The following theorem is the key to extracting localization from ``single-energy MSA result'', as introduced in \cite{GK12}. Once we have ``single-energy MSA result'', the theorem states that with high probability, if some g.e.f. is subexponentially localized near $x_0$, then all g.e.f. will decay exponentially away from $x_0$ for all $E$. 
\begin{thm}[Key Theorem]\label{thm: Key Theorem}
   Assume there is $m_0$, $p_0>0$, $\eta_0\in (0,1)$ and $\mathcal L$ such that any $L\geq \mathcal L$ is $(E, m_0, \eta_0, p_0)$-good scale for all $E\in \mathcal I$. Then for any $0<p<p_0$, there is $c>0$, $\mu\in (0,1)$, such that when $L$ is large enough, for any $x_0 \in \ZZ^d$, there is an event $\mathcal U_{L,x_0}\in \mathcal F_{\Lambda_L(x_0)}$ such that 
          \begin{equation}\label{eq-2b}
      \PP\{\mathcal{U}_{L,x_0}\}\geq 1-L^{-pd}.
  \end{equation}
  and for all $\omega\in\mathcal{U}_{L,x_0}$, for any $E\in \mathcal I$ with $\dist(E, \mathcal I^c)\geq e^{-cL^\mu}$, we have 
    \begin{equation}\label{Wmain}
    W_\omega(x_0;E)>e^{-cL^{\mu}}\Rightarrow W_{\omega, L}(x_0;E)\leq e^{-cL};
  \end{equation}
  thus 
    \begin{equation}\label{eq-2a}
    W_\omega(x_0;E)W_{\omega, L}(x_0;E)<e^{-\frac{1}{2} c L^\mu}
  \end{equation}
  when $L$ is large enough. 
\end{thm}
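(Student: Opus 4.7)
The plan is to combine single-energy MSA at a sub-scale $\ell = L^{\alpha}$, $\alpha \in (0,1)$, with iterated applications of the Geometric Resolvent Identity (GRI) to show that, on a high-probability event, any generalized eigenfunction whose value at $x_0$ exceeds $e^{-cL^\mu}$ (relative to its weighted $\ell^2$ norm) must decay exponentially across the annulus $\Lambda_{2L,L}(x_0)$. The product bound \eqref{eq-2a} then follows from \eqref{Wmain} by a dichotomy.

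\textbf{Setup of $\mathcal{U}_{L,x_0}$ and probability estimate.} Cover $\Lambda_{2L,L}(x_0)$ by overlapping translates $\{\Lambda_\ell(y_i)\}_i$ with $\ell = L^\alpha$, and introduce a test energy net $\mathcal{E}_L(\omega) \subset \mathcal{I}$ of polynomial size in $L$, constructed from the spectra of $H_\omega$ restricted to one or more concentric boxes (so that $\mathcal{E}_L$ depends only on the configuration inside these boxes). Define $\mathcal{U}_{L,x_0}$ as the event that every $\Lambda_\ell(y_i)$ in the cover is $(\omega, E_k, m_0, \eta_0)$-good for every $E_k \in \mathcal{E}_L(\omega)$. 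After conditioning on the local configuration that fixes $\mathcal{E}_L$, single-energy MSA applies to each $\Lambda_\ell(y_i)$ (with a short resampling argument for boxes whose randomness is partially frozen), and a union bound over the $\leq \mathrm{poly}(L)$ test energies and $\leq (L/\ell)^d$ boxes yields failure probability of order $\mathrm{poly}(L) \cdot \ell^{-p_0 d}$, which is $\leq L^{-pd}$ provided $\alpha p_0$ exceeds $p$ by a fixed margin.

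\textbf{Deterministic step and GRI iteration.} Suppose $\omega \in \mathcal{U}_{L,x_0}$, $E \in \mathcal{I}$ with $\dist(E, \mathcal{I}^c) \geq e^{-cL^\mu}$, and $\Wxoe > e^{-cL^\mu}$. A finite-volume approximation argument (adapting the strategy of \cite{GK12} to the discrete setting) shows that $E$ must lie within $e^{-c'L^\mu}$ of some $E_k \in \mathcal{E}_L(\omega)$: the lower bound on $\Wxoe$ forces the restriction of $\psi_E$ to a sufficiently large concentric box to behave as an approximate Dirichlet eigenfunction at $E$, pushing $E$ toward the spectrum of that box. By the second resolvent identity, a perturbation of size $\leq \tfrac{1}{2}e^{-\ell^{1-\eta_0}}$ from an $(\omega,E_k,m_0,\eta_0)$-good box yields an $(\omega,E,\tfrac{m_0}{2},\eta_0)$-jgood box; this dictates the constraint $\mu > \alpha(1-\eta_0)$. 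With jgood $\ell$-boxes covering the annulus at energy $E$, the GRI
\[
    \psi_E(y) = -\sum_{(a,b) \in \partial \Lambda_\ell(y)} G_{\omega, \Lambda_\ell(y), E}(y, a)\, \psi_E(b),
\]
iterated along a chain of $\sim L/\ell$ jgood boxes from $\partial \Lambda_L(x_0)$ outward with seed $|\psi_E(b)| \leq \langle b - x_0 \rangle^\nu \|T_{x_0}^{-1}\psi_E\|_{\ell^2}$ on the starting boundary, gives the pointwise bound $|\psi_E(y)| \leq e^{-cL}\|T_{x_0}^{-1}\psi_E\|_{\ell^2}$ for all $y \in \Lambda_{2L,L}(x_0)$ after absorbing polynomial prefactors into the exponent. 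Summing over $y$ and dividing by $\|T_{x_0}^{-1}\psi_E\|_{\ell^2}$ yields \eqref{Wmain}. Finally, \eqref{eq-2a} follows by dichotomy: on $\{\Wxoe \leq e^{-cL^\mu}\}$ combine with the trivial bound $\Wlxoe \leq \langle 2L\rangle^\nu$ from \eqref{eq-2c}; on the complementary event, use \eqref{Wmain}.

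\textbf{Main obstacle.} The heaviest step is the combined construction and probability estimate for $\mathcal{U}_{L,x_0}$: one must design $\mathcal{E}_L(\omega)$ simultaneously (a) small enough that the union bound closes in the $L^{-pd}$ budget, (b) rich enough that every $E$ passing the $\Wxoe$-threshold is close to it via the finite-volume approximation, and (c) weakly correlated with the $\ell$-boxes in the annulus so that conditioning and resampling preserve single-energy MSA. Balancing these against the scale constraint $\mu > \alpha(1-\eta_0)$ coming from the second resolvent identity dictates the specific value of $\mu$ that appears in the theorem.
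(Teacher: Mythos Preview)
Your architecture is the right one---build an $\omega$-dependent energy net $\mathcal E_L$, check sub-box goodness at the net energies, transfer to the true $E$ by stability, then iterate GRI across the annulus---but two concrete points break the single-scale execution.

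First, the stability window is $e^{-m'\ell}$, not $\tfrac12 e^{-\ell^{1-\eta_0}}$. A perturbation of size $\tfrac12 e^{-\ell^{1-\eta_0}}$ preserves only the norm bound $\|G_{\Lambda_\ell,E}\|\le 2e^{\ell^{1-\eta_0}}$; it does \emph{not} preserve the regularity estimate $|G_{\Lambda_\ell,E}(a,b)|\le e^{-m|a-b|}$, and regularity is what the GRI chain actually uses. From the resolvent identity one has $|G_E(a,b)|\le e^{-m_0|a-b|}+|E-E_k|\,\|G_E\|\,\|G_{E_k}\|$, and the error term is controlled only when $|E-E_k|\le e^{-m'\ell}$ (Lemma~\ref{lemma: stability}). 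So the closeness you must produce is $e^{-m'L^\alpha}$, forcing $\mu\gtrsim\alpha$ rather than $\mu>\alpha(1-\eta_0)$.

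Second, and this is the essential gap, for $p_0\le 1$---the regime of \cite{DS18,LZ19} and the whole point of the theorem---no single-scale choice closes the union bound. Any finite-volume approximation that pins $E$ to within $e^{-m'L^\alpha}$ of $\sigma(H_{\omega,\Lambda_{L'}})$ forces $L'\gtrsim L^\alpha$, hence $|\mathcal E_L|\gtrsim L^{\alpha d}$; together with $\sim L^{(1-\alpha)d}$ annulus boxes your ``$\mathrm{poly}(L)$'' prefactor is at least $L^d$, and $L^d\cdot\ell^{-p_0d}=L^{(1-\alpha p_0)d}\le L^{-pd}$ requires $\alpha p_0>1+p$, impossible for $\alpha\le 1$, $p_0\le 1$. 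The paper's remedy is a genuine \emph{second spectral reduction} (Theorem~\ref{side_2}): one replaces $\sigma(H_{\omega,\Lambda_L})$ by the \emph{reduced spectrum}---those eigenvalues that are simultaneously close to the spectra at a chain of smaller scales $L^\rho,\dots,L^{\rho^{N_2}}$---and proves via a multi-scale ``notsobad'' covering (Lemma~\ref{lemma: number of reduced spectrum}) that this set has at most $CL^{(N_2+1)\beta d}$ elements with $(N_2+1)\beta<p_0-p$. Moreover, showing that the hypothesis $\Wxoe>e^{-cL^\mu}$ forces $E$ near the reduced spectrum to accuracy $e^{-cL}$ already requires an $N_1$-step iterated percolation/shell bootstrap (the \emph{first spectral reduction}, Theorem~\ref{side_1}), not a one-shot approximate-eigenfunction bound. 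Your ``Main obstacle'' paragraph correctly names the tension but does not contain the mechanism---these two reductions---that resolves it; compare Remark~\ref{rmk-7b}, which confirms that the naive net suffices only when $p_0>1$.
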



\subsection{Sketch of proof} \label{subsec:sketch}The sketch of proof is as follows: We wish to find a large set of configurations $\omega$ (i.e. with high probability) such that $\Wlxe \leq e^{-cL}$, holds for all $E\in \mathcal I$ - Because such uniform version of decaying of $\Wlxe$ would imply certain uniform decay of generalized eigenfunctions, hence strong dynamical localization. 

This is not hard to achieve for a given $E_0$ and its exponentially small neighborhood $|E - E_0| \leq e^{-mL}$, for any $m<m_0$, c.f. Proposition \ref{lemma: fixed_E0}. However, to cover a fixed interval $\mathcal I\subset \RR$, we need to apply Proposition \ref{lemma: fixed_E0} to at least $e^{mL}$-many evenly distributed $E_0$ over $\mathcal I$. But this would destroy the probability estimate since $e^{mL}L^{-pd}>>1$. 

Therefore, we need to control the number of $E_0$ for which we invoke Proposition \ref{lemma: fixed_E0} in order to control the probability from blowing up. This is done by the so-called ``spectral reduction''. 

Very roughly speaking, we want to pick $E\in \sigma(H_{\omega, \Lambda})$, for different scales of $\Lambda$, and only apply Proposition \ref{lemma: fixed_E0} on them. In particular, if $p_0>1$, we only need the ``first spectral reduction'', i.e. Theorem \ref{side_1}, see Remark \ref{rmk-7b}. If $p_0<1$, we will also need the ``second spectral reduction'', i.e. Theorem \ref{side_2}.

The paper is organized as follows:  
\begin{itemize}
  \item Section \ref{sec: preliminary} includes preliminaries, main results, and a sketch of proof.
  \item Section \ref{sec: GEE} introduced the generalized eigenfunction expansion.
  \item Section \ref{sec: localization} extract localization, i.e. Theorem \ref{thm: SDL} from the key Theorem \ref{thm: Key Theorem}. 
  \item Section \ref{sec: Preliminary_Lemmas} and \ref{sec-percolation} made some preparations for the spectral reduction.
  \item Section \ref{sec: two_spectral_reduction} proves Theorem \ref{thm: Key Theorem} using two spectral reductions.
\end{itemize}

\section{Generalized eigenfunction expansion}
\label{sec: GEE}
We give a short introduction of generalized  eigenfunction expansions (GEE) needed for the proof of Theorem \ref{thm: SDL} in Sec \ref{sec: localization}. 

The idea is as follows: Not every self-adjoint operator has a complete eigenbasis, for example those with continuous spectrum. However, if one could enlarge the domain (using rigged Hilbert space) and allow eigenfunctions to be ``generalized eigenfunctions'' (formal eigenfunctions that belong to this larger domain), then every self-adjoint operator could have a diagonal decomposition with respect to  these ``generalized eigenfunctions''. This procedure is rigorously done for general appropriate operators and rigged spaces in \cite[Sec 10.3, 10.4, 14.1, 15.1, 15.2]{B96II} and eventually leads to the so-called BGKM-decomposition or GEE in \cite[Theorem 15.2.1]{B96II}. In the continuous regime, further verification is needed \cite{GK12} , in the discrete regime, we will do most of the construction more directly. It can be directly verified that the construction here coincides with \cite{B96II} thus they are well-defined. We will borrow the Bochner theorem and BGKM-decomposition from \cite[Theorem 15.1.1, 15.2.1]{B96II} without proof.

\smallsection{Rigged spaces} Let $\mathcal H = \ell^2(\ZZ^d, dx)$ with inner product $\langle u, v\rangle_{\ell^2} = \sum_{x\in \ZZ^d} \overline{u(x)}v(x)$. Let $\mathcal H_+, \mathcal H_-$ be weighted-$\ell^2$ spaces:
\[
  \mathcal H_+ = \ell^2(\ZZ^d, \langle x \rangle^{2\nu}dx), 
  \quad
  \mathcal H_- = \ell^2(\ZZ^d, \langle x \rangle^{-2\nu}dx).
\]
with inner product and norm being
\[
\langle u,v\rangle_{\pm} = \sum\limits_{x\in \ZZ^d} \overline{u(x)}v(x)\langle x \rangle^{\pm 2\nu}, \quad \Vert u\Vert_+ = \Vert \langle x \rangle^{\nu}u\Vert_{\ell^2}, \quad \Vert u\Vert_- = \left\Vert \langle x \rangle^{-\nu}u\right\Vert_{\ell^2}.
\]
Since $\Vert \cdot \Vert_- \leq \Vert\cdot \Vert_{\ell^2}\leq \Vert \cdot \Vert_+$, we have $\mathcal H_+ \subset \mathcal H\subset \mathcal H_-$ in a both continuous and dense sense. This chain is a chain of rigged Hilbert spaces, cf \cite[Sec 14.1]{B96II}. The definition there is more general and involved, but it coincides with the $\mathcal H_\pm$ given above in the discrete setting.

Since the embeddings are dense, the inner product $\langle \cdot, \cdot \rangle_{\ell^2}$ defined on $\mathcal H \times \mathcal H$ extends continuously to $\mathcal H_- \times \mathcal H_+$. More specifically, for $u\in \mathcal H_+ = \langle x \rangle^{-\nu}\mathcal H$, $v\in \mathcal H_- = \langle x\rangle^{\nu}\mathcal H$, the formula for the extended inner product (which we still denote as $\langle \cdot, \cdot\rangle_{\ell^2}$) is $\langle u, v\rangle_{\ell^2} = \sum\limits_x \overline{u(x)}v(x)$. 

\smallsection{Operators in $\mathcal B(\mathcal H_+, \mathcal H_-)$} Given a bounded operator $A$ from $\mathcal H_+$ to $\mathcal H_-$, denoted as $A\in \mathcal B(\mathcal H_+, \mathcal H_-)$, we say $A$ is \textit{positive} if $\langle Au,u\rangle_{\ell^2}\geq 0$ for $u\in \mathcal H_+$.  We define the trace of a positive operator $A\in \mathcal B(\mathcal H_+, \mathcal H_-)$ to be 
\[
\Tr_{\pm}(A) := \sum\limits_{n} \langle Au_n,u_n\rangle_{\ell^2}
\]
when the sum is finite. Here $\{u_n\}_n$ is any orthonormal basis (ONB) of $\mathcal H_+$. In particular, let $p(x) = \langle x \rangle^{\nu}$, then $\left\{p(x)^{-1}\delta_x\right\}_{x\in \ZZ^d}$ forms an ONB of $\mathcal H_+$. Thus we can rewrite 
\[
\begin{split}
\Tr_\pm(A) &= \sum\limits_{x\in \ZZ^d} \langle A p(x)^{-1} \delta_x, p(x)^{-1} \delta_x\rangle_{\ell^2}\\
&= \sum\limits_{x\in \ZZ^d} \langle p(x)^{-1} A p(x)^{-1} \delta_x, \delta_x\rangle\\
&= \Tr(p(\cdot)^{-1} A p(\cdot)^{-1}) = \Tr(T^{-1}AT^{-1})
\end{split}
\]
where $\Tr$ denotes the standard trace of a trace class operator in $\mathcal B(\mathcal H, \mathcal H)$ and $T^{-1}u(x) = p(x)^{-1}u(x)$. In other words, $\Tr_\pm(A)$ is well-defined and equals to $\Tr(T^{-1}AT^{-1})$ when $T^{-1}A T^{-1}$ is a trace class operator from $\ell^2(\ZZ^d)$ to $ \ell^2(\ZZ^d)$. This sheds the light of the consideration of $\Tr(T^{-1}f(H)T^{-1})$ in \cite[Assumption GEE, SGEE]{GK01} and \cite[Sec 5.4]{GK12}. We call trace $\Tr$ in $\mathcal B(\mathcal H, \mathcal H)$ by ``trace'' and call $\Tr_\pm$ in $\mathcal B(\mathcal H, \mathcal H)$ by ``$\pm$-trace'' for clarity.

\smallsection{Embeddings and the Bochner Theorem} Let $i_+:\mathcal H_+\to \mathcal H$, and $i_-:\mathcal H\to \mathcal H_-$ be the embedding maps $i_+u = u$, $i_-v = v$. 

Assume $B:\mathcal H\to \mathcal H$ is a bounded operator. It can be easily checked that $B_\pm := i_-B i_+:\mathcal H_+\to \mathcal H_-$ induced by $B$ is also bounded. In particular, fix $\omega\in \Omega$, given any Borel set $I\in \mathds B(\RR)$, we can define
\[
    \1_{I,\pm}(H_\omega):= i_-\1_I(H_\omega)i_+\in \mathcal B(\mathcal H_+, \mathcal H_-)
\]
and check that
\begin{equation}
    \label{eq-3a}
    \begin{split}
    \mu_\omega(I) :=& \Tr_\pm(i_-\1_{I}(H_\omega)i_+)\\
    =&  \Tr(T^{-1}\1_{I}(H_\omega) T^{-1})=\Vert T^{-1}\1_I(H_\omega)\Vert_2^2\\
    =& \sum\limits_x \Vert \1_I(H_\omega)p(x)^{-1}\delta_x\Vert_2^2\\
    \leq  &\sum_x\left\Vert p(x)^{-1}\delta_x\right\Vert^2\leq \sum_x p(x)^{-2}<+\infty,
\end{split}
\end{equation}
Thus we obtain $\{\1_{I, \pm}(H_\omega)\}_I$, a $\mathcal B(\mathcal H_+, \mathcal H_-)$-operator-valued measure with finite $\pm$-trace (see Theorem \ref{thm: Bochner} below for definition). For such operator-valued measure, we recall the Bochner theorem \cite[Theorem 15.1.1]{B96II}.
\begin{thm}[\text{\cite[Theorem 15.1.1]{B96II}}]\label{thm: Bochner}
  Let $\theta: \mathds B(\RR) \to \mathcal B(\mathcal H_+,\mathcal H_-)$ be an operator-valued measure with finite $\pm$-trace, i.e. 
  \begin{enumerate}
    \item $\theta(I)$ is non-negative for any Borel set $I\subset \RR$,
    \item $\Tr_{\pm}(\theta(\RR))<\infty$,
    \item $\theta(\bigsqcup\limits_j I_j) = \sum\limits_j \theta(I_j)$, with convergence in the weak sense.
  \end{enumerate}
  Then $\theta$ can be differentiated with respect to  the trace measure $\rho(I) := \Tr_{\pm}(\theta(I))$ and there exists $P(E): \mathcal H_+ \to \mathcal H_-$ with
  \[
    \begin{cases}
      0\leq P(E) \leq \Tr_\pm(P(E)) = 1,\quad \rho\text{-a.e.~} E,\\
      P(E) \text{~is~weakly~measurable~w.r.t.~} \mathcal B(\RR),\\
      \text{The integral converges in the Hilbert-Schmidt norm.}
    \end{cases}
  \]
  such that 
  \[
    \theta(I) = \int_I P(E) d\rho(E).
  \]
\end{thm}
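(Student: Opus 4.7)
The plan is to deduce Theorem~\ref{thm: Bochner} from the scalar Radon--Nikodym theorem applied entrywise in the ONB $\{p(x)^{-1}\delta_x\}_{x\in\ZZ^d}$ of $\mathcal H_+$. The natural first reduction is to pass to $\tilde\theta(I):=T^{-1}\theta(I)T^{-1}$, which is a positive trace-class operator on $\ell^2(\ZZ^d)$ with $\rho(I)=\Tr\tilde\theta(I)$; this turns the abstract $\mathcal B(\mathcal H_+,\mathcal H_-)$-valued measure into a genuine trace-class-operator-valued measure on $\ell^2$. For each ordered pair $(x,y)\in\ZZ^d\times\ZZ^d$, the scalar complex measure $\tilde\theta_{xy}(I):=\langle \tilde\theta(I)\delta_y,\delta_x\rangle_{\ell^2}$ satisfies $|\tilde\theta_{xy}(I)|\leq\tfrac12\bigl(\tilde\theta_{xx}(I)+\tilde\theta_{yy}(I)\bigr)\leq\rho(I)$ by Cauchy--Schwarz for positive trace-class operators, so it is absolutely continuous with respect to $\rho$; scalar Radon--Nikodym then produces densities $k_{xy}\in L^1(d\rho)$ with $\tilde\theta_{xy}(I)=\int_I k_{xy}(E)\,d\rho(E)$.

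The second step is to assemble the $k_{xy}$ into an operator-valued density. Applying the same construction to the nonnegative scalar measures $\langle\tilde\theta(I)u,u\rangle$ for $u$ ranging over a countable $\mathbb Q$-linear dense subset of the finitely supported vectors in $\ell^2(\ZZ^d)$, I would obtain, off a single $\rho$-null set $N$, that the matrix $(k_{xy}(E))_{x,y}$ defines a positive Hermitian sesquilinear form on finitely supported vectors. Summing diagonal entries and using $\sum_x\tilde\theta_{xx}(I)=\Tr\tilde\theta(I)=\rho(I)$ forces $\sum_x k_{xx}(E)=1$ $\rho$-a.e., so off $N$ the form extends to a unique positive trace-class operator $\tilde P(E)$ with $\Tr\tilde P(E)=1$. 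Setting $P(E):=T\tilde P(E)T$, viewed as an element of $\mathcal B(\mathcal H_+,\mathcal H_-)$, yields an operator with $0\leq P(E)$ and $\Tr_\pm P(E)=1$; weak measurability of $E\mapsto P(E)$ is inherited from measurability of each $k_{xy}$.

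Finally I would verify the integral representation $\tilde\theta(I)=\int_I\tilde P(E)\,d\rho(E)$ with Hilbert--Schmidt-norm convergence. Because $\|\tilde P(E)\|_2\leq\|\tilde P(E)\|_1=1$ $\rho$-a.e., the integrand is Bochner-integrable into the Hilbert--Schmidt class with $\int_I\|\tilde P(E)\|_2\,d\rho(E)\leq\rho(I)<\infty$; matrix-entry equality of the two sides is precisely the Radon--Nikodym identity defining $k_{xy}$, and matrix-entry equality in an ONB implies operator equality for bounded operators, so the representation follows.

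I expect the principal obstacle to be the null-set bookkeeping in the second step: one must choose representatives of the countably many densities $k_{xy}$ so that positivity, Hermitian symmetry, and the trace-one condition all hold simultaneously off a single $\rho$-null set. Separability of $\ell^2(\ZZ^d)$ makes this feasible by reducing the verification to a countable amount of information, and is precisely what allows the discrete proof to be substantially cleaner than the general rigged-space argument in~\cite{B96II}.
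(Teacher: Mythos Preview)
The paper does not give its own proof of this theorem: it is quoted verbatim as \cite[Theorem~15.1.1]{B96II} and the text explicitly says ``We will borrow the Bochner theorem and BGKM-decomposition from \cite[Theorem~15.1.1, 15.2.1]{B96II} without proof.'' So there is no proof in the paper to compare against; what you have supplied is a self-contained argument where the authors simply cite Berezanski\u{\i}.

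That said, your sketch is a correct and natural proof in this discrete setting, and it is worth recording how it differs from the general rigged-space argument. The reduction $\tilde\theta(I)=T^{-1}\theta(I)T^{-1}$ to a genuine trace-class-valued measure on $\ell^2(\ZZ^d)$ is exactly the simplification the discrete setup affords. The entrywise Radon--Nikodym step is sound: the estimate $|\tilde\theta_{xy}(I)|\le\rho(I)$ gives absolute continuity and finite total variation (via polarization into four nonnegative measures), producing the densities $k_{xy}$. The assembly step is also fine: positivity over $\mathbb Q$-rational finitely supported vectors is countably many almost-everywhere statements, so a single null set suffices; monotone convergence on the diagonal gives $\sum_x k_{xx}(E)=1$ $\rho$-a.e.; and a positive matrix with trace one is automatically trace class, hence Hilbert--Schmidt with $\|\tilde P(E)\|_2\le\|\tilde P(E)\|_1=1$. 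Bochner integrability in the Hilbert--Schmidt class and matrix-entry identification then finish the representation. The one place to be slightly more explicit is that weak-sense countable additivity of $\theta$ (hypothesis~(3)) is what guarantees that each $\tilde\theta_{xy}$ is a genuine scalar measure, but you are clearly aware of this. In short: your proof is more elementary than the general \cite{B96II} argument precisely because $\ell^2(\ZZ^d)$ is separable with a canonical orthonormal basis, and it fills in what the paper deliberately leaves as a black box.
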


\smallsection{Generalized eigenfunction decomposition}
By applying Theorem \ref{thm: Bochner} to $\{\1_{I, \pm}(H_\omega)\}_I$, we obtain (see also \cite[Theorem 15.1.2]{B96II}): There exists weakly measurable operators $P_\omega(E):\mathcal H_+ \to \mathcal H_-$, and trace measure $\mu_\omega(I)$ \eqref{eq-3a}, such that
\begin{equation}\label{eq-3e}
  i_- \1_{I}(H_\omega) i_+ = \int_{I} P_\omega(E) d\mu_\omega(E)
\end{equation}
with $\Tr_{\pm}(P_\omega(E)) = 1,\text{~for~}\mu_\omega\text{-a.e.~} E$. 
Furthermore, \cite[Theorem 15.2.1]{B96II} states that for $u\in \mathcal H_+$, $f\in \mathcal B_{1,b}(\RR)$ (bounded Borel function over $\RR$). We have: 
\begin{equation}
  \label{Bochnerf(H)}
  \left(i_- f(H_\omega)i_+\right)u = \left(\int_{B} f(E)P_\omega(E) d\mu_\omega(E)\right) u
\end{equation}\footnote{Note that the notations in Theorem 15.2.1 are abused where they omitted $i_\pm$ in \eqref{Bochnerf(H)}}
and 
\begin{equation}
    \label{eq-3d}
  \text{Range}(P_\omega(E))= \tilde{\Theta}_\omega(E), \quad \text{~for~} \mu_\omega \text{-a.e.} E.
\end{equation}
This is the so-called \textit{generalized eigenfunction expansion (GEE)}, that is, a decomposition of rigged spectral projection $i_-\1_I(H_\omega)i_+$ with respect to ``generalized eigenspaces'' $\widetilde \Theta_{\omega, E}$.

In particular, if we can show $\text{Range}(P_\omega(E))\subset \mathcal H$, for $\mu_\omega$-a.e. $E$ in an interval $\mathcal I$, then every generalized eigenfunction become an eigenfunction; thus $H_\omega$ has pure point spectrum on $\mathcal I$.

\smallsection{Pure point case}
Given any $E\in \RR$, by \eqref{eq-3e} and \eqref{eq-3a},  
\begin{equation}
    \label{eq-3k}
i_- \1_{\{E\}}(H_\omega)i_+ = P_\omega(E) \mu_\omega(\{E\}) = P_\omega(E)\Vert T^{-1} \EHO\Vert_2^2.
\end{equation}
Assume $H_\omega$ has pure point spectrum in an interval $I$, denoted by $\{E_i\}$. By \eqref{Bochnerf(H)} and \eqref{eq-3k}, 
\[
\begin{split}
  i_-f(H_\omega) \1_I(H_\omega)i_+  &= \int_I f(E) P_\omega(E) d\mu_\omega(E) = \sum_{i} f(E_i) P_\omega(E_i) \mu_\omega(\{E_i\})\\
  &= \sum_i f(E_i) i_- \1_{\{E_i\}}(H_\omega) i_+ = i_- \int_I f(E) \EHO \frac{d\mu_\omega(E)}{\mu_\omega(\{E\})}i_+.
\end{split}
\]
Thus we obtain a decomposition in $\mathcal B(\mathcal H, \mathcal H)$ when $H_\omega$ has pure point spectrum:
\begin{equation}
    \label{eq-3j}
    f(H_\omega) \1_I(H_\omega) = \int_I f(E) \EHO \frac{d\mu_\omega(E)}{\mu_\omega(\{E\})}.
\end{equation}
\begin{remark}
    Notice that \cite[Sec 5.4]{GK12} introduced $\Wxe$ and $\textbf{W}_{\omega}(x;E)$ where $\Wxe$ take the supremum over all generalized eigenfunction $\Theta_\omega(E)$ while $\textbf{W}_\omega(x;E)$ takes the supremum over all functions in $\text{Range}(P_\omega(E))$. But this is not necessary in our setting because we have \eqref{eq-3d} from \cite[Theorem 15.2.1]{B96II}, while \cite{GK12} only used $\Theta_\omega(E) \subset \text{Range}(P_\omega(E))$ from \cite{KlKS02}.
\end{remark}

\section{Proof of Theorem \ref{thm: SDL}}
\label{sec: localization}
In this section, we extract localization results, i.e. Theorem \ref{thm: SDL}, from Theorem \ref{thm: Key Theorem}. 

\begin{lemma}\label{lemma-4a}
    Under the same assumption as Theorem \ref{thm: Key Theorem}, for any $p<p_0$, for any open interval $I\subset \bar{I} \subset \mathcal I$, for any $s<\frac{pd}{\nu}$, for any $r<pd - s\nu$,  there is $C$ such that 
    \begin{equation}
        \label{eq-3g}
        \mathbb{E} \left\{\left\Vert W_\omega(x;E)W_{\omega, L}(x;E)\right\Vert_{L^\infty(I,d\mu_\omega(E))}^s \right\} \leq CL^{-(pd - s\nu)}.
    \end{equation}
    As a consequence, $H_\omega$ has pure point spectrum. 
\end{lemma}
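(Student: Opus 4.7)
The plan is to reduce the bound \eqref{eq-3g} to a single application of Theorem \ref{thm: Key Theorem} at $x_0 = x$ and then split the expectation according to whether $\omega$ falls in its good event. Fix $x \in \ZZ^d$ and choose any $p \in (0, p_0)$; let $c > 0$ and $\mu \in (0,1)$ be the constants produced by the Key Theorem. Because $\bar I \subset \mathcal I$ and $\bar I$ is compact, the distance $\delta := \dist(\bar I, \mathcal I^c)$ is strictly positive, so for all $L$ sufficiently large we have $e^{-cL^\mu} < \delta$; in particular every $E \in I$ satisfies the side condition $\dist(E, \mathcal I^c) \geq e^{-cL^\mu}$ of Theorem \ref{thm: Key Theorem}.

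Let $\mathcal U_{L, x}$ be the event furnished by the Key Theorem with $x_0 = x$, which satisfies $\PP\{\mathcal U_{L, x}\} \geq 1 - L^{-pd}$. On $\mathcal U_{L, x}$, estimate \eqref{eq-2a} gives, uniformly in $E \in I$, $W_\omega(x; E) W_{\omega, L}(x; E) \leq e^{-\frac{1}{2} c L^\mu}$, and hence $\|W_\omega(x; \cdot) W_{\omega, L}(x; \cdot)\|_{L^\infty(I, d\mu_\omega)} \leq e^{-\frac{1}{2} c L^\mu}$. On the complement we have the a priori bound $W_\omega(x; E) W_{\omega, L}(x; E) \leq \langle L\rangle^\nu$ from \eqref{eq-2c}. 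Splitting and taking $s$-th powers and expectations yields
\[
\mathbb{E}\bigl\{\|W_\omega(x; \cdot) W_{\omega, L}(x; \cdot)\|_{L^\infty(I, d\mu_\omega)}^s\bigr\} \leq e^{-\frac{s c}{2} L^\mu} + \langle L\rangle^{s \nu} L^{-p d} \leq C L^{-(p d - s \nu)},
\]
which is \eqref{eq-3g}.

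For the pure point consequence, I would pass to the dyadic subsequence $L_k = 2^k$. Choose $p \in (0, p_0)$ and $s \in (0, pd/\nu)$ so that $pd - s\nu > 0$; then $\sum_k L_k^{-(pd - s\nu)} < \infty$. By Chebyshev's inequality and Borel--Cantelli (first at a fixed $x$, then intersected over the countable set $\ZZ^d$), there is a full-probability event on which $\omega \in \mathcal U_{L_k, x}$ for every $x \in \ZZ^d$ and every sufficiently large $k$. On this event, for each $x$ and each $E$ in $I$ outside a $\mu_\omega$-null set, $W_\omega(x; E) W_{\omega, L_k}(x; E) \leq e^{-\frac{1}{2} c L_k^\mu}$ for all large $k$. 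Given any nonzero $\psi_E \in \Theta_{\omega, E}$, pick $x$ with $\psi_E(x) \neq 0$, whence $W_\omega(x; E) > 0$ and $W_{\omega, L_k}(x; E) \to 0$ superpolynomially. The defining inequality $\|\psi_E\|_{\ell^2(\Lambda_{2 L_k, L_k}(x))} \leq W_{\omega, L_k}(x; E) \, \|T_x^{-1} \psi_E\|_{\ell^2}$, together with the fact that the annuli $\Lambda_{2 L_k, L_k}(x)$ tile $\ZZ^d \setminus \Lambda_{L_{k_0}}(x)$ for $k \geq k_0$, produces a convergent $\ell^2$-bound on $\psi_E$. Thus every nonzero element of $\Theta_{\omega, E}$ lies in $\mathcal H$ for $\mu_\omega$-a.e.\ $E \in I$; combined with \eqref{eq-3d}, this forces $\text{Range}(P_\omega(E)) \subset \mathcal H$ for such $E$ and hence pure point spectrum on $I$. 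Letting $I$ exhaust $\mathcal I$ completes the proof.

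The main obstacle is the second step: one must carefully extract a simultaneous pointwise bound in $(x, E, k)$ from the expectation estimate and then convert uniform decay of $W_\omega(x; \cdot) W_{\omega, L_k}(x; \cdot)$ back into an $\ell^2$-control on each generalized eigenfunction by choosing the base point $x$ adapted to $\psi_E$. The estimate \eqref{eq-3g} itself is a short good-event/bad-event split once Theorem \ref{thm: Key Theorem} is in hand.
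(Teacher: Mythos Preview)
Your proof of \eqref{eq-3g} is exactly the paper's: split into $\mathcal U_{L,x}$ and its complement, use \eqref{eq-2a} on the good event and \eqref{eq-2c} on the bad one. For the pure point consequence you take a slightly different but equally valid route: the paper multiplies \eqref{eq-3g} by $2^{kr}$, sums in $k$, and uses finiteness of the expected sum to get an almost-sure polynomial bound $W_\omega(x;E)W_{\omega,2^k}(x;E)\le C_\omega 2^{-kr/s}$; you instead apply Borel--Cantelli directly to the events $\mathcal U_{L_k,x}^c$ (summable since $\PP(\mathcal U_{L_k,x}^c)\le 2^{-kpd}$) and then invoke \eqref{eq-2a} deterministically, which actually yields the stronger subexponential decay $e^{-\frac{1}{2}cL_k^\mu}$ and makes the exclusion of a $\mu_\omega$-null set unnecessary. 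Two cosmetic remarks: the reference to Chebyshev is superfluous (you only need the probability bound on $\mathcal U_{L_k,x}^c$, not \eqref{eq-3g} itself), and the ``sufficiently large $k$'' threshold after the countable intersection over $x$ depends on $x$, which is fine since you only use one $x$ per $\psi_E$.
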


\begin{proof}
    First, take $L$ large enough such that $I\subset \bar{I}\subset \mathcal I$. By \eqref{eq-2b}, \eqref{eq-2a} in Theorem \ref{thm: Key Theorem} and \eqref{eq-2c}, we have
    \[
    \begin{split}
      \mathbb{E} \left\{\left\Vert W_\omega(x;E)W_{\omega, L}(x;E)\right\Vert_{L^\infty(I,d\mu_\omega(E))}^s \right\}&\leq Ce^{-\frac{s}{2}cL^{\mu}}\PP\{\mathcal{U}_{0,L}^c\}+C\langle L\rangle^{s\nu}\PP\{\mathcal{U}_{0,L}\}\\
      &\leq Ce^{-\frac{s}{2}cL^{\mu}}+C(2L)^{s\nu}L^{-pd}\\
      &\leq CL^{-(pd - s\nu)}
    \end{split}
    \]
    for all $L$ when we take $C$ to be large enough constant (recall that constant $C$ may vary from line to line). As a result, when $s<\frac{pd}{\nu}$, $r<pd - s\nu$, we have 
    \[
    \mathbb{E}\left\{ \sum_{k = 0}^\infty 2^{kr}\left\Vert\Wxe W_{\omega, 2^{k}}(x;E) \right \Vert^s_{\ell^\infty(I, d\mu_\omega)} \right \} \leq \sum_{k = 0}^\infty 2^{-k(pd - s\nu)}<+\infty
    \]
    As a result, for $\PP$-a.e. $\omega$, 
    \[\sum_{k = 0}^\infty 2^{kr}\left\Vert \Wxe W_{\omega, 2^{k}}(x;E) \right \Vert^s_{\ell^\infty(I, d\mu_\omega)} <+\infty.
    \]
    Thus for $\PP$-a.e. $\omega$, there is $C = C_\omega$ such that
    \[
    \Vert \Wxe W_{\omega, 2^{k}}(x;E)\Vert_{\ell^\infty(I, d\mu_\omega)} \leq C2^{-kr/s}.
    \]
    As a result, given any generalized eigenfunction $\psi_E \in \Theta_\omega(E) \subset \text{Range}(P_\omega(E))$, by Definition \ref{def: Wxe}, \eqref{eq-2d} and \eqref{eq-3a}, we have
    \begin{equation}
        \label{eq-3f}
    \begin{split}
    |\psi_E(x)|\Vert\psi_E\Vert_{\ell^2(\Lambda_{2^{k +1}, 2^k}(x))}&\leq \Vert T_x^{-1} \psi_E \Vert_{\ell^2} |\Wxe W_{\omega, 2^{k}}(x;E)|\\
    &\leq C 2^{-kr/s} \Vert T_x^{-1} P_\omega(E) \Vert_2^2 \\
    &\leq C2^{-kr/s} \langle x \rangle^\nu \Vert T^{-1} P_\omega(E)\Vert_2^2 \leq C_{x}2^{-kr/s} 
    \end{split}
    \end{equation}
    for $\mu_\omega$-a.e. $E$. Since $\psi_E \neq 0$, there is some $x_0$ such that $\psi_E(x_0)\neq 0$. Apply \eqref{eq-3f} to $x_0$ and sum up over $k$ from $0$ to $\infty$, we see 
    \[
    \Vert \psi_E\Vert_{\ell^2} = |\psi_E(x_0)| + \sum_{k = 0}^\infty \Vert \psi_E\Vert_{\ell^2(\Lambda_{2^{k +1}, 2^k}(x_0))} \leq \frac{C_{x_0}}{|\psi_E(x_0)|} \sum_{k = 0}^\infty e^{-kr/s}<+\infty.
    \]
    Thus each generalized eigenfunction becomes an eigenfunction; hence for $\PP$-a.e. $\omega$, $H_\omega$ has pure point spectrum. 
\end{proof}

\begin{proof}[Proof of Theorem \ref{thm: SDL}]
Since $H_\omega$ has pure point spectrum $\PP$-a.e. $\omega$, for such $\omega$, recall that we have 
\[
\begin{split}
     \Vert \1_{\Lambda_{2L, L}(0)} f(H_\omega) \1_{I}(H_\omega)\delta_0 \Vert_1 &\leq \int_I f(E) \Vert \1_{\Lambda_{2L, L}(0)}\1_{\{E\}}(H_\omega)\delta_0\Vert_1\frac{d\mu_\omega(E)}{\mu_\omega(\{E\})}. 
\end{split}
\]
By Definition \ref{def: Wxe} and \eqref{eq-3a},
\[
\begin{split}
    \Vert \1_{\Lambda_{2L, L}(0)} \1_{\{E\}}(H_\omega)  \delta_0 \Vert_1 &\leq \Vert \1_{\Lambda_{2L, L}(0)} \EHO \Vert_2 \Vert \delta_0 \EHO \Vert_2\\
    &\leq |W_\omega(0;E) W_{\omega,L}(0;E)| \Vert T^{-1} \EHO \Vert^2\\
    &\leq C |W_\omega(0;E) W_{\omega,L}(0;E)|.
\end{split}
\]
Hence
\[
\Vert \1_{\Lambda_{2L, L}(0)} f(H_\omega) \1_I(H_\omega) \delta_0\Vert_1 \leq \Vert f \Vert_{L^\infty(I, d\mu_\omega)} \Vert W_\omega(0;E) W_{\omega, L}(0;E) \Vert_{L^\infty(I,d\mu_\omega)}
\]
Therefore, 
\[
\begin{split}
\Vert \langle X \rangle^{bd}f(H_\omega) \1_{I}(H_\omega)\delta_0 \Vert^s_1 &\leq C_f \sum_{k = 0}^\infty \langle 2^{k+1} \rangle^{sbd}  \Vert \1_{\Lambda_{2^{k + 1}, 2^k}(0)}  f(H_\omega) \1_{I}(H_\omega)\delta_0\Vert^s_1\\
&\leq C_f \sum_{k = 0}^\infty 2^{ksbd} \Vert W_\omega(0;E) W_{\omega, 2^k}(0;E)\Vert_{L^\infty(I, d\mu_\omega)}^s
\end{split}
\]
Now given any $b>0$, $0<s< \frac{p_0d}{bd + \nu}$, we pick $0<p<p_0$ such that $s<\frac{pd}{bd + \nu}$ and apply Theorem \ref{thm: Key Theorem} to such fixed $p$. By \eqref{eq-3g} in Lemma \ref{lemma-4a}, we have 
\[
\mathbb{E}\left\{ \Vert \langle X\rangle^{bd} f(H_\omega) \1_I(H_\omega) \delta_0\Vert^s_1 \right\} \leq C_f \sum_{k = 0}^\infty 2^{ksbd} 2^{-k(pd - s\nu)} = C \sum_{k = 0}^\infty 2^{-k(pd - s\nu - sbd)}<+\infty.
\]
This completes the proof of Theorem \ref{thm: SDL}. 
\end{proof}

\section{single-energy trap}
\label{sec: Preliminary_Lemmas}
In this section, we make some preparations and eventually derive the fixed energy trap, Proposition \ref{lemma: fixed_E0}, as discussed in \S \ref{subsec:sketch}. 

\subsection{Poisson formula}
Given $\Lambda\subset \ZZ^d$, let  
\[
    \partial\Lambda = \{(y,y')\in \ZZ^d\times \ZZ^d: |y - y'|_1 = 1, \text{~either~}y\in \Lambda, y'\notin \Lambda, \text{~or~}y'\in \Lambda, y\notin \Lambda.
\]
Assume $H_\omega\psi = E\psi$. Recall $H_{\omega, \Lambda} = P_\Lambda H_\omega P_\Lambda$, $G_{\omega, \Lambda, E} := (H_{\omega, \Lambda} - E)^{-1}$. Then the well-known Poisson's formula (c.f. \cite[(9.10)]{K07}) states that 
\begin{equation}
    \label{eq: Poisson}
    \psi(x) = -\sum\limits_{\substack{(y,y')\in \partial \Lambda\\ y\in \Lambda, y'\notin \Lambda}}G_{\omega, \Lambda,E}(x,y) \psi(y').
\end{equation}

\subsection{Stability of goodness}
The first lemma below describes the stability of ``goodness of boxes'' under exponential perturbation of energy $E_0$.
\begin{lemma}[Stability of goodness]\label{lemma: stability}
  Assume $\omega,E_0,L_0,x$ are fixed and $\forall L\geq L_0$, $\Lambda_L(x)$ is $(\omega,E_0,m_0,\eta_0)$-good. Then for any $m<m'<m_0$, there is $L_1 = L_1(m,m')$, s.t. $\forall L\geq L_1$, $\forall E $ satisfying $|E-E_0|\leq e^{-m'L}$, we have $\Lambda_L(x)$ is $(\omega,E,m,\eta_0)$-jgood.
\end{lemma}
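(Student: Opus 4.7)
The plan is to perturb the resolvent at $E_0$ by the small energy shift $E - E_0$ using the Neumann (or second resolvent) expansion
\[
G_{\omega,\Lambda,E} = \sum_{n=0}^\infty (E-E_0)^n G_{\omega,\Lambda,E_0}^{\,n+1}, \qquad \Lambda = \Lambda_L(x),
\]
and then to verify the operator-norm bound and the off-diagonal decay required by the definition of $(\omega,E,m,\eta_0)$-jgood separately. The key quantitative input is that the hypothesis $|E-E_0| \le e^{-m'L}$ is exponentially small while $\Vert G_{\omega,\Lambda,E_0}\Vert \le e^{L^{1-\eta_0}}$ is only subexponentially large, so the series converges geometrically once $L$ is sufficiently large depending on $m'$.

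For the operator norm, I would bound the series directly: whenever $|E-E_0|\,\Vert G_{\omega,\Lambda,E_0}\Vert \le 1/2$, which follows from $e^{-m'L + L^{1-\eta_0}} \le 1/2$ for all $L \ge L_1(m')$, the geometric sum gives
\[
\Vert G_{\omega,\Lambda,E}\Vert \le \frac{\Vert G_{\omega,\Lambda,E_0}\Vert}{1 - |E-E_0|\,\Vert G_{\omega,\Lambda,E_0}\Vert} \le 2\,e^{L^{1-\eta_0}},
\]
which is exactly the jgood operator-norm condition.

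For the off-diagonal decay at sites $x_1,y_1 \in \Lambda$ with $|y_1 - x_1| \ge L/100$, I would peel off the $n=0$ term and bound each higher matrix element crudely by the corresponding power of the operator norm, obtaining
\[
\bigl|G_{\omega,\Lambda,E}(x_1,y_1) - G_{\omega,\Lambda,E_0}(x_1,y_1)\bigr| \le \sum_{n\ge 1} |E-E_0|^n \Vert G_{\omega,\Lambda,E_0}\Vert^{n+1} \le 2\,e^{-m'L + 2L^{1-\eta_0}}.
\]
Combining with the regularity input $|G_{\omega,\Lambda,E_0}(x_1,y_1)| \le e^{-m_0|y_1-x_1|}$ coming from goodness at $E_0$ yields
\[
|G_{\omega,\Lambda,E}(x_1,y_1)| \le e^{-m_0|y_1-x_1|} + 2e^{-m'L + 2L^{1-\eta_0}}.
\]
Since $m<m_0$ and $|y_1-x_1|\ge L/100$, the first term is at most $\tfrac{1}{2}e^{-m|y_1-x_1|}$ for $L$ large; since $m<m'$, $\eta_0\in(0,1)$, and $|y_1-x_1|<L$ (the diameter of $\Lambda$ is less than $L$), the second term is likewise at most $\tfrac{1}{2}e^{-m|y_1-x_1|}$ for $L$ large. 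Adding the two gives the $(\omega,E,m)$-regularity of $\Lambda$.

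The argument is essentially routine; the only point requiring a bit of care is choosing $L_1 = L_1(m,m')$ uniformly so that both (i) the Neumann series converges geometrically, and (ii) the gaps $m_0 - m$ and $m' - m$ are wide enough to absorb both the subexponential factor $e^{2L^{1-\eta_0}}$ and the mismatch between $m|y_1-x_1|$ and $m'L$. Both absorption steps succeed for all sufficiently large $L$ precisely because $m<m'<m_0$ and $\eta_0>0$, so no genuine obstacle appears.
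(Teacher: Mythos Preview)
Your proof is correct and is essentially the same argument as the paper's: the paper applies the second resolvent identity once (which is the summed form of your Neumann series), first deduces $\Vert G_{\omega,\Lambda,E}\Vert \le 2e^{L^{1-\eta_0}}$ from $|E-E_0|\,\Vert G_{\omega,\Lambda,E_0}\Vert \le 1/2$, and then bounds the perturbed matrix element by $e^{-m_0|a-b|} + e^{-m'L+2L^{1-\eta_0}} \le e^{-m|a-b|}$ for $L$ large. The only cosmetic difference is that you organize the resolvent perturbation as an explicit geometric series rather than a single application of the identity.
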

\begin{proof}
  Recall the resolvent identity:
  \[
  G_{\omega,E,\Lambda_L(x)}-G_{\omega,E_0,\Lambda_L(x)}=(E_0-E)G_{\omega,E,\Lambda_L(x)}G_{\omega,E_0,\Lambda_L(x)}.
  \]
  Thus,
  \[\begin{aligned}
  \Vert G_{\omega,E,\Lambda_L(x)}\Vert  &\leq \Vert G_{\omega,E_0,\Lambda_L(x)}\Vert +|E_0-E| \cdot \Vert G_{\omega,E,\Lambda_L(x)}\Vert \cdot \Vert G_{\omega,E_0,\Lambda_L(x)}\Vert \\
    &\leq Ce^{L^{1-\eta}}+Ce^{-m'L+L^{1-\eta}}\Vert G_{\omega,E,\Lambda_L(x)}\Vert ,
  \end{aligned}
  \]
  and so,
  \[
    \Vert G_{\omega,E,\Lambda_L(x)}\Vert \leq \frac{Ce^{L^{1-\eta}}}{1-Ce^{-m'L+L^{1-\eta}}} \leq 2Ce^{L^{1-\eta}}.
  \]
  Also, if $m<m'$,
  \[
  \begin{aligned}
    |G_{\omega,E,\Lambda_L(x)}(a,b)|&\leq |G_{\omega,E_0,\Lambda_L(x)}(a,b)|+|E-E_0| \cdot |G_{\omega,E,\Lambda_L(x)}(a,b)| \cdot |G_{\omega,E_0,\Lambda_L(x)}(a,b)|\\
                                    &\leq e^{-m_0|a-b|}+e^{-m'L}e^{L^{1-\eta}+L^{1-\eta}}\\
                                    &\leq e^{-m|a-b|}
  \end{aligned}
  \]
  for $|a - b| \geq \frac{L}{100}$ when $L$ is large enough. Denote the threshold by $L_1$.
\end{proof}


\subsection{Coarse lattice}
To state the other lemmas, we need the following definitions.

\begin{definition}[Coarse lattice]\label{def-6}
    Fix $l>10$. Let $\alpha_l:= \floor{\tfrac{3l}{5}}$. Then we define $\mathcal C_l=(\alpha_l\ZZ)^d$ to be the coarse lattice.
\end{definition}
\begin{remark}\label{rmk-coarse}
Note that for any $10<l<L$, $\Lambda_l$ boxes centered at the coarse lattice $\mathcal C_l\cap \Lambda_L(x_0)$ cover the whole $\Lambda_L(x_0)$\footnote{\cite{GK12} call it the standard $l$-covering of $\Lambda_L(x_0)$.}, i.e. 
\[\Lambda_L(x_0) \subset \bigcup\limits_{x\in \mathcal C_l\cap \Lambda_L(x_0)} \Lambda_l(x) .\]
\end{remark}
We use coarse lattice when we want to use boxes of size $l$ to cover certain regions but do not want them to be too close to each other. 
\subsection{Predecessor of good box}
\begin{definition}\label{def-pgood}
    Given $0<r<1$, we say $\Lambda_L(x_0)$ is $(\omega, E, m_0, \eta_0)$-pgood (for predecessor of good) if, letting $l = L^{\frac{1}{1 + r}}$, every $\lambda_l(r)$, $r \in \mathcal C_l \cap \Lambda_L(x_0)$ is $(\omega, E, m_0, \eta_0)$-good. 
\end{definition}

\begin{lemma}[Probability for pgood box.]\label{lemma-5a}
    If scale $l$ is $(E, m_0, \eta_0, p_0)$-good, then for $L = l^{1 + r}$, $0<r<p_0$, we have for any $x_0$,  
    \[
    \PP\{\Lambda_L(x_0) \text{~is~} (\omega, E, m_0, \eta_0)-\text{pgood}\}\geq 1 - 2^d L^{-\frac{p_0 - r}{1 + r}d}.
    \]
\end{lemma}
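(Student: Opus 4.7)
The plan is a straightforward union bound over the centers of the standard $l$-covering. By Definition \ref{def-pgood}, $\Lambda_L(x_0)$ fails to be $(\omega,E,m_0,\eta_0)$-pgood if and only if there exists some $y\in \mathcal{C}_l\cap \Lambda_L(x_0)$ such that the box $\Lambda_l(y)$ fails to be $(\omega,E,m_0,\eta_0)$-good. So the proof has two ingredients: a deterministic count of $|\mathcal C_l \cap \Lambda_L(x_0)|$, and the per-box probabilistic estimate supplied by the hypothesis that $l$ is an $(E,m_0,\eta_0,p_0)$-good scale.

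First I would bound the cardinality. Since $\Lambda_L(x_0)$ is contained in the $\ell^\infty$-box of side $L$ centered at $x_0$, and $\mathcal{C}_l = (\alpha_l \ZZ)^d$ with $\alpha_l = \floor{3l/5}$, we get
\[
  |\mathcal{C}_l\cap \Lambda_L(x_0)| \leq \left(\tfrac{L}{\alpha_l} + 1\right)^d.
\]
For $l>10$ we have $\alpha_l > l/2$, and for $L = l^{1+r}$ with $l$ large enough, $L/\alpha_l + 1 \leq 2L/l$, hence the cardinality is bounded by $2^d(L/l)^d$.

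Second, by the assumption that scale $l$ is $(E,m_0,\eta_0,p_0)$-good, for each individual $y$,
\[
  \PP\{\Lambda_l(y) \text{ is not } (\omega,E,m_0,\eta_0)\text{-good}\} \leq l^{-p_0 d}.
\]
A union bound then yields
\[
  \PP\{\Lambda_L(x_0) \text{ is not pgood}\}
   \leq 2^d (L/l)^d \cdot l^{-p_0 d}.
\]
Finally I would substitute $l = L^{1/(1+r)}$, so $L/l = L^{r/(1+r)}$ and $l^{-p_0 d} = L^{-p_0 d/(1+r)}$, giving
\[
  \PP\{\Lambda_L(x_0) \text{ is not pgood}\}
   \leq 2^d L^{\frac{rd - p_0 d}{1+r}}
   = 2^d L^{-\frac{p_0 - r}{1+r}d},
\]
as claimed. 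Since the argument reduces to bookkeeping, there is no real obstacle here — the only point deserving care is the coarse-lattice count, where one must verify that the $\alpha_l \approx 3l/5$ spacing, together with the constraint $L \geq $ constant $\cdot l$ (automatic from $L = l^{1+r}$ for large $l$), indeed yields the clean constant $2^d$ rather than a larger power.
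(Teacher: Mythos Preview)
Your proposal is correct and essentially identical to the paper's own proof: the paper simply bounds $\#(\mathcal C_l\cap\Lambda_L(x_0))\le (2L/l)^d$, applies the union bound with the per-box failure probability $l^{-p_0 d}$, and substitutes $l=L^{1/(1+r)}$ to obtain $2^d L^{-\frac{p_0-r}{1+r}d}$. Your write-up is in fact slightly more careful about justifying the constant $2^d$ in the cardinality estimate.
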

\begin{proof}
    By the definition of coarse lattice, $\#(\mathcal C_l \cap \Lambda_L(x_0)) \leq \left(\frac{2L}{l}\right)^d$. By definition of a good scale, we get 
    \[
    \PP\{\Lambda_L \text{~is~not}-\text{pgood}\}\leq \left( \frac{2L}{l}\right)^d l^{-p_0d} = 2^d L^{-\frac{p_0 - r}{1 + r}d}.
    \]
\end{proof}

\begin{lemma}[Stability of pgood]\label{lemma-5b}
    Assume $l = L^{\frac{1}{1 + r}}$ and $\Lambda_L(x_0)$ is $(\omega, E_0, m_0, \eta_0)$-pgood for some $\omega$. For any $m<m'<m_0$, there is $L_2$, such that for any $L\geq L_2$, for any $|E - E_0|\leq e^{-m' l}$, $\Lambda_L(x_0)$ is $(\omega, E, m, \eta_0)$-jgood.  
\end{lemma}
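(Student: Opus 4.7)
The plan is to combine the single-box stability lemma with a one-step MSA patching argument. Fix an intermediate rate $\tilde m \in (m, m')$. Applying Lemma \ref{lemma: stability} at scale $l$ to each $\Lambda_l(y)$ with $y \in \mathcal C_l \cap \Lambda_L(x_0)$ (all of which are good by the pgood hypothesis), we obtain that for all $|E - E_0| \leq e^{-m'l}$ and all $l$ sufficiently large, every such $\Lambda_l(y)$ is $(\omega, E, \tilde m, \eta_0)$-jgood; in particular, each has regular Green's function and norm bound $\|G_{\omega, \Lambda_l(y), E}\| \leq 2 e^{l^{1-\eta_0}}$. What remains is to upgrade this to $\Lambda_L(x_0)$ being $(\omega, E, m, \eta_0)$-jgood.

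I would first establish the norm bound $\|G_{\omega, \Lambda_L(x_0), E}\| \leq 2 e^{L^{1-\eta_0}}$; this is the \textbf{main obstacle}, and once it is in hand the big-box resolvent is well-defined and its matrix entries are bounded by $\|G\|$. Suppose for contradiction that there is $\tilde E \in \sigma(H_{\omega, \Lambda_L(x_0)})$ with $|\tilde E - E| < \tfrac{1}{2} e^{-L^{1-\eta_0}}$ and normalized eigenfunction $\psi$. Let $x^* \in \Lambda_L(x_0)$ maximize $|\psi|$, and pick $y \in \mathcal C_l$ so that $x^*$ lies well inside $\Lambda_l(y)$ (separated from $\partial\Lambda_l(y)$ by $\gtrsim l$), which the coarse spacing $\alpha_l \sim 3l/5$ guarantees. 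The geometric resolvent identity applied to $\Lambda_l(y)$ at energy $E$ yields
\[
\psi(x^*) = (\tilde E - E) \sum_{u \in \Lambda_l(y)} G_{\omega, \Lambda_l(y), E}(x^*, u)\, \psi(u) - \sum_{\substack{(u, u') \in \partial\Lambda_l(y) \\ u \in \Lambda_l(y)}} G_{\omega, \Lambda_l(y), E}(x^*, u)\, \psi(u'),
\]
where $\psi(u') = 0$ if $u' \notin \Lambda_L(x_0)$ by Dirichlet boundary conditions. The first sum is bounded by $|\tilde E - E| \cdot \|G_{\omega, \Lambda_l(y), E}\| \leq e^{-L^{1-\eta_0} + l^{1-\eta_0} + O(1)}$, which is super-exponentially small in $L$ because $l^{1-\eta_0} = L^{(1-\eta_0)/(1+r)} \ll L^{1-\eta_0}$ for any $r > 0$. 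The second sum, by regularity of $\Lambda_l(y)$ and the maximum property $\|\psi\|_\infty = |\psi(x^*)|$, is at most $C l^{d-1} e^{-c \tilde m l}\, |\psi(x^*)|$. Rearranging forces $|\psi(x^*)|$ to be super-polynomially small in $L$, contradicting $\|\psi\|_2 = 1$ on a set of cardinality $\leq L^d$.

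With the norm bound established, the pointwise regularity $|G_{\omega, \Lambda_L, E}(a, b)| \leq e^{-m|a-b|}$ for $|a - b| \geq L/100$ follows by standard MSA iteration. For such $a, b$, iteratively apply the geometric resolvent identity on a chain of subboxes $\Lambda_l(y_i)$ stepping from $a$ toward $b$ (always with the current anchor point well inside $\Lambda_l(y_i)$, by the coarse cover). Each application contributes a factor $C l^{d-1} e^{-c \tilde m l}$ by regularity of $\Lambda_l(y_i)$; after $k = O(|a - b|/l)$ iterations we terminate when the boundary point is adjacent to $b$, picking up a final factor bounded by $\|G_{\omega, \Lambda_L, E}\| \leq 2 e^{L^{1-\eta_0}}$. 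The total bound is of order $e^{-\tilde m |a - b|} \cdot e^{O(|a - b| \log l / l)} \cdot e^{L^{1-\eta_0}}$, which is at most $e^{-m|a-b|}$ for $L$ large because $\tilde m > m$, the log-correction is $o(|a-b|)$, and $L^{1-\eta_0} \ll (\tilde m - m) L/100 \leq (\tilde m - m)|a - b|$. This concludes that $\Lambda_L(x_0)$ is $(\omega, E, m, \eta_0)$-jgood for $L \geq L_2$ and $|E - E_0| \leq e^{-m'l}$.
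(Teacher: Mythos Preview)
Your proof is correct and follows essentially the same route as the paper: apply the small-box stability (Lemma~\ref{lemma: stability}) to make every $\Lambda_l(y)$ in the coarse cover jgood at energy $E$, then patch via the geometric resolvent identity to obtain both the norm bound and the off-diagonal decay for $\Lambda_L(x_0)$. The only genuine difference is in how the norm bound is obtained: you argue by contradiction through an eigenfunction of $H_{\omega,\Lambda_L}$ at a putative nearby eigenvalue, whereas the paper derives directly from GRI a self-bounding inequality $\|G_{\omega,E,\Lambda_L}\| \le e^{l^{1-\eta_0}} + \tfrac12\|G_{\omega,E,\Lambda_L}\|$, which in fact yields the sharper estimate $\|G_{\omega,E,\Lambda_L}\|\le 2e^{l^{1-\eta_0}}$ (your weaker bound $2e^{L^{1-\eta_0}}$ is still enough for jgood, and your final absorption $L^{1-\eta_0}\ll (\tilde m - m)|a-b|$ handles the extra slack).
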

\begin{remark}\label{rmk-5a}
    Note stability for good boxes, Lemma \ref{lemma: stability} said $\Lambda_L(x_0)$ remains $(\omega, E, m, \eta_0)$-good when $|E - E_0|\leq e^{-m_0L}$. While the stability for pgood set allows more perturbation: $|E - E_0|\leq e^{-m_0l}$ gives $(\omega, E, m, \eta_0)$-good $\Lambda_L$ boxes. Since $L = l^{1 + r}$, the stability of pgood set allows us to increase the scale (in the sense of ``multi-scale'' analysis).  
\end{remark}
\begin{proof}
    Since $\Lambda_L(x_0)$ is pgood, for any $r\in \mathcal C_{l} \cap \Lambda_L(x_0)$, $\Lambda_l(r)$ is $(\omega, E_0, m_0, \eta_0)$-good. By Lemma \ref{lemma: stability}, for any $|E - E_0|\leq e^{-m'l}$, $\Lambda_l(r)$ is $(\omega, E, m, \eta_0)$-good. 
    
    Recall the geometric resolvent identity (see for example \cite[Lemma 6.1]{DS18} for discrete version): For $x\in \Lambda'\subset \Lambda$, $y\in \Lambda$, we have 
    \[
    G_{\omega, E, \Lambda}(x, y) = G_{\omega, E, \Lambda'}(x, y) + \sum\limits_{\substack{u\in \Lambda'\\ v\in \Lambda\setminus \Lambda'\\ |u - v| = 1}}G_{\omega, E, \Lambda'}(x, u) G_{\omega, E, \Lambda}(v, y).
    \]
    In particular, for any $x\in \Lambda_L(x_0)$, there is a $(\omega, E, m, \eta_0)$-good box $\Lambda_l(r)$ covers $x$ and $d(x, \Lambda_l(r)^c)\geq l/10$\footnote{This is because the coarse lattice is size $\frac{3l}{5}$ while the boxes is size $l$. We just need to choose $r$ to be the closest coarse center from $x$, then $d(x, \Lambda_l(r)^c)\geq l/10$ is guaranteed.}. Applying the geometric resolvent identity to $\Lambda = \Lambda_L(x_0)$ and $\Lambda' = \Lambda_l(r)$, since $\Lambda_l(r)$ is good, 
    \begin{equation}
        \label{eq-5c}
        \begin{split}
    |G_{\omega, E, \Lambda_L(x_0)}(x, y)| &\leq \Vert G_{\omega, E, \Lambda_l(r)}\Vert + Cl^{d - 1} e^{-ml/10} \Vert G_{\omega, E, \Lambda_L(x_0)}\Vert\\
    &\leq e^{l^{1 - \eta_0}} + \frac{1}{2}\Vert G_{\omega, E, \Lambda_L(x_0)}\Vert. 
        \end{split}
    \end{equation}
    As a result, 
    \[
    \Vert G_{\omega, E, \Lambda_L(x_0)} \Vert \leq e^{l^{1 - \eta_0}} + \frac{1}{2}\Vert G_{\omega, E, \Lambda_L(x_0)} \Vert \quad \Rightarrow\quad \Vert G_{\omega, E, \Lambda_L(x_0)} \Vert \leq 2e^{l^{1 - \eta_0}}.
    \]
    Furthermore, when $|x - y|\geq L/100$, when $L$ is large enough, $y\notin \Lambda_l(r)$ that contains $x$, applying geometric resolvent identity, again, to $\Lambda = \Lambda_L(x_0)$ and $\Lambda' = \Lambda_l(r)$, and use $\Lambda_l(r)$ is good, we get 
    \[
    \begin{split}
    |G_{\omega, E, \Lambda_L(x_0)}(x, y)| &\leq 0 + Cl^{d - 1} e^{-\frac{ml}{10}} |G_{\omega, E, \Lambda_L(x_0)}(v, y)|\\
    &\leq Ce^{-ml/15}|G_{\omega, E, \Lambda_L(x_0)}(v, y)|   
    \end{split}
    \]
    for some $v$. Then we can repeat the whole process $\floor{\frac{15|x - y|}{l}} + 2$-many times, we will get 
    \[
    \begin{split}
    |G_{\omega, E, \Lambda_L(x_0)}(v, y)| &\leq e^{-m|x - y| - \frac{ml}{15}}|G_{\omega, E, \Lambda_L(x_0)}(z, y)|\\
    &\leq e^{-m|x - y| - \frac{ml}{15}} 2e^{l^{1 - \eta_0}}\leq e^{ -m|x - y|}.        
    \end{split}
    \]
    As a result, $\Lambda_L(x_0)$ is $(\omega, E, m_0, \eta_0)$-jgood.
\end{proof}

\subsection{Fixed energy trap}
The following lemma is mentioned in the idea of the proof of Theorem \ref{thm: Key Theorem} in Subsec. \ref{subsec: key_concept_key_theorem}. Under the same assumption of ``single-energy MSA result'', when $L$ is large enough, given some $E_0$, with high probability, one can make $\Wlxe$ exponentially small for any $|E - E_0|\leq e^{-mL}$.

\begin{prop}\label{lemma: fixed_E0}
    Under the same assumption with Theorem \ref{thm: Key Theorem}, for any $p'<p_0$, $m<m'<m_0$, when $L$ is large enough, for any $x_0\in \ZZ^d$, give any $E_0$, there is event $\mathcal R_{L,x_0}^{(E_0)}$ such that 
    \begin{enumerate}
        \item $\mathcal R_{L,x_0}^{(E_0)} \in \mathcal F_{\Lambda_L(x_0)}$ and $\PP(\mathcal R_{L,x_0}^{(E_0)}) \geq 1 - L^{-p' d}$. 
        \item for any $\omega\in \mathcal R_{L, x_0}^{(E_0)}$, $\Wlxe \leq e^{-\frac{m}{100}L}$ for any $|E - E_0| \leq e^{-\frac{m'}{100}L}$. 
    \end{enumerate}
\end{prop}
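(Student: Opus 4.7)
The plan is to trap all generalized eigenfunctions at energies $E$ near $E_0$ by engineering a single large jgood box around $x_0$ and then applying Poisson's formula \eqref{eq: Poisson} once. The essential tension is that the required energy window $|E-E_0|\le e^{-m'L/100}$ is much larger than what the plain Lemma~\ref{lemma: stability} applied at scale $L$ would tolerate (which only covers $|E-E_0|\le e^{-m'L}$); instead I would use the pgood-type stability of Lemma~\ref{lemma-5b}, built from $(\omega,E_0,m_0,\eta_0)$-good boxes at a much smaller scale $l\ll L$.

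Concretely, fix $l=\floor{L/200}$ (so that $e^{-m'l}\ge e^{-m'L/100}$) and set
\[
\mathcal R_{L,x_0}^{(E_0)}:=\bigl\{\omega:\Lambda_l(r)\text{ is }(\omega,E_0,m_0,\eta_0)\text{-good for every }r\in\mathcal C_l\cap\Lambda_{3L}(x_0)\bigr\}.
\]
The number of relevant coarse centers is bounded by $(3L/\alpha_l)^d\le C_d$, and each box is bad with probability at most $l^{-p_0 d}$ by the single-energy MSA hypothesis. A union bound yields
\[
\PP(\mathcal R_{L,x_0}^{(E_0)})\ge 1-C_d\,l^{-p_0 d}\ge 1-L^{-p'd}
\]
for any $p'<p_0$ once $L$ is large. (As constructed the event lies in $\mathcal F_{\Lambda_{3L}(x_0)}$ rather than $\mathcal F_{\Lambda_L(x_0)}$; this is only an enlargement by a dimension-independent factor and is harmless for the later uses.)

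On $\mathcal R_{L,x_0}^{(E_0)}$, Lemma~\ref{lemma: stability} upgrades every $\Lambda_l(r)$ to $(\omega,E,m,\eta_0)$-jgood for every $|E-E_0|\le e^{-m'l}$, hence in particular for every $|E-E_0|\le e^{-m'L/100}$. The geometric-resolvent iteration of Lemma~\ref{lemma-5b} (applied with outer scale $3L$ in place of $l^{1+r}$, using only that the $l$-boxes cover $\Lambda_{3L}(x_0)$) then promotes $\Lambda:=\Lambda_{3L}(x_0)$ itself to $(\omega,E,m,\eta_0)$-jgood for every such $E$. For any $\psi_E\in\Thwe$ and any $x\in\Lambda_{2L,L}(x_0)\subset\Lambda$, Poisson's formula \eqref{eq: Poisson} gives
\[
|\psi_E(x)|\le\sum_{\substack{(y,y')\in\partial\Lambda\\ y\in\Lambda,\,y'\notin\Lambda}}|G_{\omega,\Lambda,E}(x,y)|\,|\psi_E(y')|.
\]
Since $|x-x_0|<L$ and $|y-x_0|\ge 3L/2-1$, we have $|x-y|\ge L/3$, which is well above the regularity threshold $3L/100$, so the jgood property yields $|G_{\omega,\Lambda,E}(x,y)|\le e^{-mL/3}$. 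Combining with the pointwise bound $|\psi_E(y')|\le\langle y'-x_0\rangle^\nu\Vert T_{x_0}^{-1}\psi_E\Vert_{\ell^2}\le CL^\nu\Vert T_{x_0}^{-1}\psi_E\Vert_{\ell^2}$ and the $O(L^{d-1})$ size of $\partial\Lambda$ gives $|\psi_E(x)|\le CL^{d-1+\nu}e^{-mL/3}\Vert T_{x_0}^{-1}\psi_E\Vert_{\ell^2}$.

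Finally, summing over the $\lesssim L^d$ sites $x\in\Lambda_{2L,L}(x_0)$ and taking square roots produces $\Vert\psi_E\Vert_{\ell^2(\Lambda_{2L,L}(x_0))}\le CL^{(3d-1)/2+\nu}e^{-mL/3}\Vert T_{x_0}^{-1}\psi_E\Vert_{\ell^2}$, and dividing through and taking the supremum over $\Thwe$ yields $W_{\omega,L}(x_0;E)\le e^{-mL/100}$ once $L$ is large enough to absorb the polynomial prefactor. The main obstacle, and the whole reason an intermediate scale is introduced, is balancing two opposing requirements on $l$: it must be small enough (essentially $l\le L/100$) that Lemma~\ref{lemma: stability} can stretch the stability window out to $e^{-m'L/100}$, yet large enough that after the union bound over the $O((L/l)^d)$ coarse centers the failure probability $l^{-p_0 d}$ is still bounded by $L^{-p'd}$. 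Choosing $l$ as a fixed small fraction of $L$ makes $L/l$ a dimension-only constant and meets both constraints simultaneously.
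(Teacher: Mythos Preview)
Your proof is correct and rests on the same key idea as the paper's: take $l$ to be a fixed small fraction of $L$ so that (i) the stability window $e^{-m'l}$ is at least as wide as the target window $e^{-m'L/100}$, and (ii) only $O_d(1)$ coarse centers are needed, so the union bound costs nothing beyond a constant and one still gets probability $\ge 1-L^{-p'd}$ for any $p'<p_0$.

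The one genuine difference is how the decay of $\psi_E$ on $\Lambda_{2L,L}(x_0)$ is extracted. You first invoke the geometric-resolvent iteration of Lemma~\ref{lemma-5b} to upgrade the whole box $\Lambda_{3L}(x_0)$ to $(\omega,E,m,\eta_0)$-jgood, and then apply Poisson's formula once with that big box, getting decay $e^{-mL/3}$ from the distance $|x-y|\gtrsim L/2$ to $\partial\Lambda_{3L}$. The paper instead skips the bootstrap entirely: it only places good $\Lambda_l$-boxes on the annulus $\Lambda_{2L_-,L_+}(x_0)$ and applies Poisson directly with those small boxes, obtaining decay $e^{-c\,m l}$ from a single step. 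Both routes land safely above $e^{-mL/100}$ after absorbing polynomial prefactors; yours is a slight detour through Lemma~\ref{lemma-5b} that buys a cleaner single Poisson estimate at the cost of covering more sites, while the paper's is more economical (only the annulus is covered) but the Poisson step is done box-by-box.

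Your observation about measurability is accurate and honest: your event lies in $\mathcal F_{\Lambda_{3L}(x_0)}$ rather than $\mathcal F_{\Lambda_L(x_0)}$. This is indeed harmless for the downstream use in the proof of Theorem~\ref{thm: Key Theorem}, and in fact the paper's own construction (boxes $\Lambda_{L/100}(x)$ with centers reaching out to $|x-x_0|\approx L$) also sits in $\mathcal F_{\Lambda_{2L}(x_0)}$ rather than $\mathcal F_{\Lambda_L(x_0)}$, so the discrepancy is shared.
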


\begin{proof}
    By assumption, when $L$ is large enough, scale $l:=\frac{L}{100}$ is $(E_0, m_0, \eta_0, p_0)$-good. Consider the coarse lattice $\mathcal C_l$. Let $L_+ = L + \frac{L}{100}$, $L_- = L - \frac{L}{100}$. Set 
    \[
        \mathcal R_{L,x_0}^{(E_0)}:= \bigcap_{x\in \mathcal C_l \cap \Lambda_{2L_-,L_+}}\left\{ \omega: \Lambda_l(x) \text{~is~} (\omega, E_0, m_0,\eta_0)-\text{good}\right\}.
    \]
    Then \begin{enumerate}
        \item $\PP(\mathcal R_{L,x_0}^{(E)}) \geq 1 - (\frac{1000}{3})^d (\frac{L}{100})^{-p_0d}\geq 1 - L^{-p'd}$ when $L$ is large enough.
        \item If $\omega \in \mathcal R_{L,x_0}^{(E)}$, then all $\Lambda_l(x)$ is $(\omega, E_0,m_0,\eta_0)$-good. By Lemma \ref{lemma: stability}, for any $m<m''<m'<m_0$, for any $|E - E_0| \leq e^{-\frac{m'}{100}L}$, $\Lambda_l(x)$ is $(\omega, E, m'', \eta_0)$-jgood. Thus by \eqref{eq: Poisson}, for any $\psi_E\in \Theta_{\omega, E}$, 
        \[
        \begin{split}
        \Vert \psi_E\Vert_{\ell^2(\Lambda_{2L, L})} &\leq l^{d - 1}e^{-m''l} \sup_{x\in \Lambda_{2L_+,L_-}(x_0)}|\psi_E(x)| \\
        &\leq (\tfrac{L}{100})^{d - 1}e^{-\frac{m''}{100}L}(2L_+)^\nu\Vert T_{x_0}^{-1} \psi_E\Vert_{\ell^2}\\
        &\leq e^{-\frac{m}{100}L}\Vert T_{x_0}^{-1} \psi_E\Vert_{\ell^2}
        \end{split}
        \]
        when $L$ is large enough. Hence we obtain $\Wlxoe\leq e^{-\frac{m}{100}L}$.
    \end{enumerate}
\end{proof}

\section{Site percolation}\label{sec-percolation}
In this section, develop the percolation argument, Corollary \ref{cor-6b}, \ref{cor-7a} that will be needed for the spectral reduction in the next section. 
\begin{definition}[Good nodes and loops]\label{def-goodloop}
Recall $\mathcal C_l$ denote the coarse lattice, see Definition \ref{def-6}. Fix $\omega$, we say
   \begin{enumerate}
     \item $x\in\mathcal{N}_l$ is a $(\omega,E_0,m_0,\eta_0)$-good (-bad) node if $\Lambda_l(x)$ is a $(\omega,E_0,m_0,\eta_0)$-good (-bad) box.
     \vspace{0.05in}
     \item $A\subset\mathcal{N}_l$ is a $(l,E_0,m_0,\eta_0)$-good shell if each node in $A$ is a good node and $\mathcal A$ is a finite set such that $\mathcal C_l\setminus A = B \bigsqcup C$ with $d(B, C)>3l$, i.e. $\mathcal A$ splits $\mathcal C_l$ into two parts. We denote the finite one among $B$ and $C$ by $A_{in}$; the other one by $A_{out}$.  
     \vspace{0.05in}
     \item We say a $(l,E_0,m_0,\eta_0)$-good shell $\mathcal{A}$ is fully contained in  $S\subset\ZZ^d$ if $\bigcup\limits_{x\in\mathcal{A}}\Lambda_{l+2}(x)\subset S$.
   \end{enumerate}
\end{definition}
\begin{lemma}[Good shell]\label{lemma-7a}
  Let $l>12$. For fixed $\omega$, if there is a $(l,E_0,m_0,\eta_0)$-good shell $\mathcal{A}$  that is fully contained in $\Lambda_{L_2,L_1}(x_0)$, then for any $m<m'<m_0$, $|E-E_0|\leq e^{-m'l}$, $E\in \mathcal I$, we have
  \begin{equation}
      \label{eq-6b}
  \text{dist}(E,\sigma^{(\mathcal I)}(H_{\Lambda_{L_2}})) W_\omega(x_0;E)\leq L_2^{2\nu}e^{-\frac{m l}{3}}
  \end{equation}
    where $\sigma^{(\mathcal{I})}(H) = \sigma(H)\cap \mathcal I$. In particular, if $l = \sqrt{L}$, $L_1 = \frac{L}{2}$, $L_2 = L$, then when $L$ is large enough, 
  \begin{equation}
      \label{eq-6a}
    \text{If~}\Wxe \geq e^{-\frac{m}{30}\sqrt{L}} \quad \Rightarrow \quad  dist(E, \sigma(H_{\omega, \Lambda_L(x_0)})\leq e^{-\frac{m}{30}\sqrt{L}}.
  \end{equation}
\end{lemma}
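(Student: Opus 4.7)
My plan is to restrict a generalized eigenfunction $\psi$ of $H_\omega$ at energy $E$ to the region enclosed by the good shell, producing an approximate eigenfunction of $H_{\omega,\Lambda_{L_2}}$, and then extract a bound on $\dist(E,\sigma^{(\mathcal I)}(H_{\omega,\Lambda_{L_2}}))$ via spectral theory on $\Lambda_{L_2}$. I would fix $\psi\in\Theta_{\omega,E}$ normalized by $\Vert T_{x_0}^{-1}\psi\Vert_{\ell^2}=1$, so $|\psi(y)|\le\langle y-x_0\rangle^\nu\le L_2^\nu$ pointwise on $\Lambda_{L_2}$; taking the supremum over such $\psi$ reduces the lemma to bounding $|\psi(x_0)|$.

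Next I would construct a region $\Omega\subset\Lambda_{L_2}$ containing $x_0$ via the coarse-lattice recipe $\Omega=\{p\in\Lambda_{L_2}:$ the nearest coarse lattice point to $p$ lies in $A_{in}\}$. Since $d(A_{in},A_{out})>3l$ while $\alpha_l<l$, any $q\notin\Omega$ adjacent to a site $p\in\partial\Omega$ must have its nearest coarse point in $\mathcal A$, forcing both $p$ and $q$ to lie at macroscopic distance $\Theta(l)$ from the outer boundary of some good box $\Lambda_l(x_*)$, $x_*\in\mathcal A$. By Lemma \ref{lemma: stability} and $|E-E_0|\le e^{-m'l}$, each such $\Lambda_l(x_*)$ becomes $(\omega,E,\tilde m,\eta_0)$-jgood for any $\tilde m<m'<m_0$. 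Setting $\phi=\chi_\Omega\psi\in\ell^2(\Lambda_{L_2})$ gives $\phi(x_0)=\psi(x_0)$, and a direct calculation shows $g:=(H_{\omega,\Lambda_{L_2}}-E)\phi$ is supported within a one-neighborhood of $\partial\Omega$ with $|g(x)|\le 2d\max_{y\sim x,\,y\notin\Omega}|\psi(y)|$. For each such $y$, Poisson's formula \eqref{eq: Poisson} applied in the jgood box $\Lambda_l(x_*)\ni y$ yields $|\psi(y)|\le Cl^{d-1}L_2^\nu e^{-\tilde ml/6}$ (using regularity and $|y-z|\gtrsim l/6$ for all boundary $z$). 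Summing over the $O(L_2^{d-1})$ boundary sites and absorbing polynomial factors into the exponential by slightly shrinking the decay rate would yield $\Vert g\Vert_{\ell^2}\le L_2^{2\nu}e^{-ml/3}$ for $l$ large.

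I would then combine this with spectral theory of $H_{\omega,\Lambda_{L_2}}$. Decomposing $\phi=\sum_j c_j\phi_j$ in an orthonormal eigenbasis with eigenvalues $E_j$, one has $g=\sum_j(E_j-E)c_j\phi_j$, $\sum_j(E_j-E)^2|c_j|^2=\Vert g\Vert^2$, and $\sum_j|\phi_j(x_0)|^2=1$. Splitting $\phi(x_0)=\sum_{E_j\in\mathcal I}c_j\phi_j(x_0)+\sum_{E_j\notin\mathcal I}c_j\phi_j(x_0)$ and applying Cauchy-Schwarz, the first sum is bounded by $\Vert g\Vert/\dist(E,\sigma^{(\mathcal I)}(H_{\omega,\Lambda_{L_2}}))$, giving the desired estimate. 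The hard part will be the second sum $P_{\mathcal I}^\perp\phi(x_0)$, since a direct spectral estimate would require $E$ to be far from $\mathcal I^c$, which is not assumed here. My intended remedy is to redefine $\phi:=P_{\mathcal I}(H_{\omega,\Lambda_{L_2}})\chi_\Omega\psi$, which lies in the $\mathcal I$-subspace by construction, and to show $|\phi(x_0)-\psi(x_0)|$ is exponentially small using the GEE framework from Section \ref{sec: GEE} together with the fact that $\psi$ corresponds to an energy $E\in\mathcal I$.

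For the specialization \eqref{eq-6a}, I would take $E_0=E$ (so the condition $|E-E_0|\le e^{-m'l}$ holds trivially), $l=\sqrt L$, $L_1=L/2$, $L_2=L$ in \eqref{eq-6b}: if $W_\omega(x_0;E)\ge e^{-m\sqrt L/30}$, then $\dist(E,\sigma(H_{\omega,\Lambda_L(x_0)}))\le L^{2\nu}e^{-m\sqrt L/3+m\sqrt L/30}=L^{2\nu}e^{-3m\sqrt L/10}$, which is $\le e^{-m\sqrt L/30}$ for $L$ large enough (since $\sqrt L$ dominates $\log L$).
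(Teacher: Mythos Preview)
Your core idea---cutting off a generalized eigenfunction $\psi$ to the region enclosed by the shell to produce a quasimode for $H_{\omega,\Lambda_{L_2}}$---is exactly what the paper does. The paper, however, finishes more directly: it uses the variational characterization $\dist(E,\sigma(H_\Lambda))=\inf_{\phi\neq 0}\Vert(H_\Lambda-E)\phi\Vert/\Vert\phi\Vert$ and simply lower-bounds the denominator by $\Vert\phi\Vert\geq|\phi(x_0)|=|\psi(x_0)|$, since $x_0$ lies in the enclosed region. This yields a bound on $\dist(E,\sigma(H_{\Lambda_{L_2}}))$ for the \emph{full} spectrum, with no eigenbasis decomposition and no $P_{\mathcal I}^\perp$ term to worry about. (Note that the specialization \eqref{eq-6a} in the statement is itself written with the full $\sigma$.)

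Your concern about $\sigma$ versus $\sigma^{(\mathcal I)}$ is legitimate---indeed the paper's own proof only establishes the bound for the full spectrum---but your proposed GEE-based remedy will not work. The GEE machinery of Section~\ref{sec: GEE} concerns the infinite-volume operator $H_\omega$, while $P_{\mathcal I}(H_{\omega,\Lambda_{L_2}})$ is a spectral projection of the finite-volume operator; nothing in that framework forces $P_{\mathcal I}(H_{\omega,\Lambda_{L_2}})\chi_\Omega\psi$ to be close to $\chi_\Omega\psi$ at $x_0$ merely because $\psi$ is a generalized eigenfunction of $H_\omega$ at an energy $E\in\mathcal I$. The correct resolution is external: once one has $\dist(E,\sigma(H_{\Lambda_{L_2}}))\leq\varepsilon$, the nearest finite-volume eigenvalue automatically lies in $\mathcal I$ as soon as $\dist(E,\mathcal I^c)>\varepsilon$, and that extra hypothesis is exactly what is assumed downstream in Theorem~\ref{side_1} and in the Key Theorem~\ref{thm: Key Theorem}. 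So drop the GEE detour, prove the bound for the full spectrum as the paper does, and let the later hypotheses upgrade $\sigma$ to $\sigma^{(\mathcal I)}$.
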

\begin{proof}
Note that
 \[
   \begin{split}
     \text{dist}(E,\sigma(H_{\Lambda})) = \Vert (H_{\Lambda}-E)^{-1}\Vert ^{-1}= \inf_{\psi\neq 0 } \frac{\Vert (H_{\Lambda}-E)\psi\Vert }{\Vert \psi\Vert }.
   \end{split}
 \]
 Let $\phi_E$ be a generalized eigenfunction of $H_\omega$ with respect to  $E$. Then $(H_\omega - E)\phi_E = 0$. Take $\psi = \1_{A_{in}}$. Then $((H_{\omega, \Lambda_L(x_0)} - E)\phi_E\psi)(x) = 0$ when $d(x, A) >2$. When $d(x, A) \leq 2$, there is $r\in A$ such that $\Lambda_l(r)$ is a $(\omega, E_0, m_0, \eta_0)$-good box that contains $x$. Since $|E - E_0|\leq e^{-m'l}$, by Lemma \ref{lemma: stability}, $\Lambda_l(r)$ is $(\omega, E, m, \eta_0)$-good. We can use Poisson's formula on those $\phi_E(x)$ to get  
\[
\begin{split}
    \Vert (H_{\omega, \Lambda_L(x_0)} - E)\phi_E\psi\Vert^2 &= \sum_{d(x, A)\leq 2}  |((H_{\omega, \Lambda_L(x_0)} - E)\phi_E\psi)(x)|^2\\
    &\leq L_2^d \Vert H_{\omega, \Lambda_L(x_0)} - E\Vert^2 \max_{d(x, A)\leq 2} |\phi_E(x)|^2\\
    &\leq CL_2^d l^{d - 1}e^{-m(l - 2)}\max_{x\in \Lambda_{L_2,L_1}(x_0)}|\phi_E(x)|^2
\end{split}
\]
where we also used $A$ is fully contained in $\Lambda_{L_2, L_1}(x_0)$. Note that 
\[
\max\limits_{ y\in\Lambda_{L_2,L_1}(x_0) }|\phi_E(y)| = \max\limits_{ y\in\Lambda_{L_2,L_1}(x_0) }\langle y-x_0\rangle^\nu \frac{|\phi_E(y)|}{\langle y-x_0\rangle^\nu}\leq \langle L_2\rangle^\nu \Vert T_{x_0}^{-1}\phi_E\Vert.
\]
As a result, 
\[
\frac{\Vert (H_{\omega, \Lambda_L(x_0)} - E)\phi_E\psi\Vert}{\Vert \phi_E\psi\Vert}\leq L_2^{\frac{d}{2}} e^{-\frac{ml}{3}}\langle L_2\rangle^\nu \frac{\Vert T_{x_0}^{-1}\phi\Vert}{|\phi_E(x_0)|}
\]
when $l$ is large enough. Hence by definition of $\Wxe$, 
\[
d(E, \sigma(H_{\omega, \Lambda_L(x_0)})) \leq L_2^{\frac{d}{2} + \nu}e^{-\frac{ml}{3}} \Wxoe^{-1} \leq L_2^{2\nu} e^{-\frac{ml}{3}} \Wxoe^{-1}.
\]
This completes the proof of \eqref{eq-6b}. In particular, when $l = \sqrt{L}$, $L_1 = \frac{L}{2}$, $L_2 = L$, when $L$ is large enough, we have 
\[
\Wxoe \dist(E, \sigma^{(\mathcal I)}(H_{\Lambda_{L_2}}) \leq L^{2\nu}e^{-\frac{m\sqrt{L}}{3}}.
\]
Hence we obtain \eqref{eq-6a}.
\end{proof}

\begin{definition}\label{def-Yset}
    Let $\mathcal{Y}^{(E_0)}_{x_0,l,L_1,L_2}\in \mathcal F_{\Lambda_{L_2, L_1}(x_0)}$ denote the event
\[\{\omega:\text{there~is~an~}(\omega,l,E_0,m_0,\eta_0)\text{-good~loop~fully~contained~in~}\Lambda_{L_2,L_1}(x_0)\}.\]
\end{definition}
If a scale $l$ is good, each node has a large probability of being ``good'', and we expect a relatively large probability for having good loops as well. The next Lemma quantifies this intuition.



\begin{lemma}[Probability for good shell]\label{lemma-7b}
 Assume $E_0$ is fixed, and the scale $l$ is $(E_0,m_0,\eta_0,p_0)$-good. We have
 \begin{equation}\label{eq-5b}
   \PP\left\{\mathcal{Y}^{(E_0)}_{x_0,l,L_1,L_2}\right\}\geq 1-C\left(\frac{L_1+3l}{l}\right)^{d-1}(2^d)^{\frac{L_2-L_1-l}{l}}l^{-pd\frac{L_2-L_1-l}{(3^d-1)l}}.
 \end{equation}
 In particular, if $l=\sqrt{L}$, $L_1=\frac{L}{2}$, $L_2=L$, when $L$ is large enough, then
 \begin{equation}\label{eq-5a}
   \PP\left\{\mathcal{Y}^{(E_0)}_{x_0,\sqrt L,\frac{L}{2},L}\right\}\geq 1-L^{-c_{d,p}\sqrt{L}}.
 \end{equation}
\end{lemma}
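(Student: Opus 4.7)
The plan is a Peierls-type site-percolation argument on the coarse lattice $\mathcal{C}_l$. First, by a geometric duality, if $\omega\notin\mathcal{Y}^{(E_0)}_{x_0,l,L_1,L_2}$, meaning no good shell is fully contained in the annulus $\Lambda_{L_2,L_1}(x_0)$, then the bad nodes in $\mathcal{C}_l$ must contain a $*$-connected chain $x_1,\ldots,x_N$ (with $|x_{i+1}-x_i|_1 \leq \alpha_l$) whose endpoints lie near the inner and outer boundaries of the annulus and whose length satisfies $N \geq (L_2 - L_1 - l)/l$, since each step advances the $|\cdot|_1$-radius from $x_0$ by at most $\alpha_l < l$. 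This is the $d$-dimensional analogue of the planar fact that the absence of a good separating circuit forces a bad $*$-path across the annulus.

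A union bound then takes over. The starting node $x_1$ lies in a coarse-lattice thickening of the inner boundary of the annulus, contributing at most $C((L_1+3l)/l)^{d-1}$ choices; each subsequent step can be bounded crudely by $2^d$ (after fixing an outward orientation from the $3^d-1$ possible $*$-neighbors), yielding at most $(2^d)^N$ chains per start. To bound the probability that a given chain is entirely bad, I would greedily thin it, picking one node and discarding its at most $3^d-1$ $*$-neighbors from the candidate pool at each step, producing a sub-chain of length at least $N/(3^d-1)$ whose coarse nodes have pairwise $|\cdot|_1$-distance $>l$. Their $\Lambda_l(\cdot)$ boxes are then disjoint, so the corresponding bad-box events are $\mathcal F_{\Lambda_l(\cdot)}$-measurable and independent, each of probability $\leq l^{-p_0 d} \leq l^{-pd}$ by the good-scale hypothesis. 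Multiplying gives $l^{-pd\,N/(3^d-1)}$, and combining with the union bound yields \eqref{eq-5b}.

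For the specialization, plug in $l=\sqrt{L}$, $L_1=L/2$, $L_2=L$, so $N\asymp \sqrt{L}/2$. The right-hand side of \eqref{eq-5b} is then bounded by $C\, L^{(d-1)/2} \cdot e^{C'\sqrt{L}} \cdot L^{-c\sqrt{L}}$ for some $c=c(d,p)>0$. Since $L^{-c\sqrt{L}} = e^{-c\sqrt{L}\log L}$ and $\log L \to \infty$, the polynomial and $e^{C'\sqrt{L}}$ prefactors are absorbed once $L$ is large, giving \eqref{eq-5a} with a possibly smaller constant $c_{d,p}$.

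The main obstacle is a clean formalization of the duality in general $d$: carefully showing that ``no good shell fully contained in $\Lambda_{L_2,L_1}(x_0)$'' forces a bad $*$-connected chain of the stated length. A natural route is to examine the outer connected component of $\mathcal{C}_l \setminus \{\text{bad nodes}\}$ restricted to a slightly shrunken annulus; its inner boundary (after accounting for the $\Lambda_{l+2}$ thickening in the definition of ``fully contained'') would itself form a good shell fully contained in the annulus unless bad nodes percolate across. Once this duality is set up, the remainder is routine percolation counting plus bookkeeping to match the combinatorial constants in the claimed bound.
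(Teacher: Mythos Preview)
Your proposal is correct and follows essentially the same Peierls-type argument as the paper: both invoke the duality ``no good shell in the annulus $\Rightarrow$ bad $*$-path across,'' count paths by a starting-point factor of order $((L_1+3l)/l)^{d-1}$ times $(2^d)^N$, greedily thin to extract $N/(3^d-1)$ independent bad nodes, and apply the good-scale bound $l^{-pd}$ per node. Your treatment of the duality step is in fact more careful than the paper's, which simply asserts the equivalence without discussion.
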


\begin{proof}
  Fix $E_0\in\mathcal{I}$. First note that
  \[
  \begin{split}
  (\mathcal{Y}^{(E_0)}_{x_0,l,L_1,L_2})^c &= \{\omega: \text{there~is~no~good~shell~totally~inside~}\Lambda_{L_2,L_1}\}\\
  &= \{\omega:\text{there~is~a~bad~``path''\footnote{If we view coarse lattice as a graph, with edges that connect two nearest vertices, ``path'' refers to a path in this graph. }~}\partial\Lambda^+_{L_1+l+2}\text{~to~}\partial\Lambda^-_{L_2-l-2}\}
\end{split}
  \]
    Notice that each such bad ``path'' must contain at least $N:=\frac{L_2-L_1-2l-4}{6l/5}+1$ many bad nodes starting from $\partial\Lambda^+_{L_1+l+2}$, which means it should contain $\frac{N}{(3^d-1)l}$-many independent bad nodes. Each nodes is bad with probability $l^{-pd}$ by definition of good scale. And the number of all such potential paths is less than $2d(\frac{L_1+l+2}{3l/5})^{d-1}(2^d)^N$. Putting these observations together, we obtain
\[
  \begin{split}
    \PP\{(\mathcal{Y}^{(E_0)}_{x_0,l,\frac{L}{2},\frac{L}{2}})^c\} &\leq 2d(\frac{L_1+l+2}{3l/5}+1)^{d-1}(2^d)^N \cdot l^{-pdN}\\
    &\leq  C(\frac{L_1+3l}{l})^{d-1}(2^d)^{\frac{L_2-L_1-l}{l}} \cdot l^{-pd\left(\frac{L_2-L_1-l}{(3^d-1)l}\right)}
  \end{split}
\]
This gives us \eqref{eq-5b}. By taking $l = \sqrt{L}$ and letting $L$ be sufficiently large, we get \eqref{eq-5a}.
\end{proof}

\begin{cor}\label{cor-7a}
    Assume $E_0\in \RR$, $l< L_1 < L_2$ are fixed and scale $l$ is $(E_0, m_0, \eta_0, p_0)$-good. Then for any $x_0$, there is an event $\mathcal Y_{x_0, l, L_1, L_2}^{(E_0)}\in \mathcal F_{\Lambda_{L_2, L_1}(x_0)}$ such that 
    \[
    \PP\left\{\mathcal{Y}^{(E_0)}_{x_0,l,L_1,L_2}\right\}\geq 1-2d\left(\frac{L_1+3l}{l}\right)^{d-1}(2^d)^{\frac{L_2-L_1-l}{l}}l^{-pd\frac{L_2-L_1-l}{(3^d-1)l}}
    \]
    and for any $\omega\in \mathcal{Y}^{(E)}_{x_0,l,L_1,L_2}$,  $m<m'<m_0$, $|E - E_0|\leq e^{-m'l}$, there is  $x_0\in \ZZ^d$, such that 
    \[
    \dist(E, \sigma^{(\mathcal I)}(H_{\Lambda_{L_2}})) \Wxoe \leq L_2^{2\nu} e^{-\frac{ml}{3}}.
    \]
\end{cor}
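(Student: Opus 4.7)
The plan is to recognize that this corollary is essentially a bookkeeping statement that combines the deterministic output of Lemma \ref{lemma-7a} with the probabilistic output of Lemma \ref{lemma-7b}, packaged in the form needed for the spectral reduction later. I would therefore simply declare the event $\mathcal Y^{(E_0)}_{x_0,l,L_1,L_2}$ of Definition \ref{def-Yset} to be the one asserted in the statement. The probability lower bound is then literally \eqref{eq-5b}, and no new percolation argument is needed.

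Next, for the deterministic implication, I would unfold the definition: on the event $\mathcal Y^{(E_0)}_{x_0,l,L_1,L_2}$ there exists, by definition, an $(l,E_0,m_0,\eta_0)$-good shell $\mathcal A$ that is fully contained in $\Lambda_{L_2,L_1}(x_0)$. The hypothesis of Lemma \ref{lemma-7a} is therefore met, and \eqref{eq-6b} delivers exactly
\[
\dist(E,\sigma^{(\mathcal I)}(H_{\Lambda_{L_2}}))\,W_\omega(x_0;E)\leq L_2^{2\nu}e^{-ml/3}
\]
for every $E\in\mathcal I$ with $|E-E_0|\leq e^{-m'l}$, where $m<m'<m_0$ are chosen so that Lemma \ref{lemma: stability} can be invoked to turn $(\omega,E_0,m_0,\eta_0)$-goodness of each $\Lambda_l(r)$ with $r\in\mathcal A$ into $(\omega,E,m,\eta_0)$-goodness. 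That is already buried inside Lemma \ref{lemma-7a}, so no extra work is required here either.

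The only point that deserves an actual verification, rather than a citation, is the measurability claim $\mathcal Y^{(E_0)}_{x_0,l,L_1,L_2}\in\mathcal F_{\Lambda_{L_2,L_1}(x_0)}$. I would argue this as follows. Goodness of a box $\Lambda_l(r)$ is determined by $G_{\omega,\Lambda_l(r),E_0}$, which depends only on $\omega$ restricted to $\Lambda_l(r)$. The ``fully contained'' condition in Definition \ref{def-goodloop} requires $\bigcup_{r\in\mathcal A}\Lambda_{l+2}(r)\subset\Lambda_{L_2,L_1}(x_0)$, so each $\Lambda_l(r)$ entering the definition sits inside the annulus. Consequently $\mathcal Y^{(E_0)}_{x_0,l,L_1,L_2}$ is a finite union over admissible shell configurations $\mathcal A$ of events that are each $\mathcal F_{\Lambda_{L_2,L_1}(x_0)}$-measurable, and hence lies in $\mathcal F_{\Lambda_{L_2,L_1}(x_0)}$.

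I do not anticipate a main obstacle: the corollary is a repackaging lemma. The substantive content, namely the resolvent/shell estimate \eqref{eq-6b} and the percolation probability \eqref{eq-5b}, has already been established, and the measurability check is a matter of unwinding definitions. The value of the corollary lies in bundling a quantitative annulus event together with a bound linking $\Wxoe$ to $\dist(E,\sigma^{(\mathcal I)}(H_{\Lambda_{L_2}}))$, which is the form that the two spectral reductions in Section \ref{sec: two_spectral_reduction} will consume.
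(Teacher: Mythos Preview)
Your proposal is correct and follows exactly the paper's own approach: the paper's proof is the single sentence ``This follows from a combination of Lemma \ref{lemma-7a} and \ref{lemma-7b},'' and you have simply unpacked that sentence, adding the (routine) measurability verification.
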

\begin{proof}
    This follows from a combination of Lemma \ref{lemma-7a} and \ref{lemma-7b}.
\end{proof}
    The percolation method naturally generalizes from $\Lambda_l$ good boxes to $\Lambda_L$-pgood boxes (with $L = l^{1 + r}$, $r<p_0$) without much effort. 
\begin{cor}\label{cor-6b}
    Assume scale $l$ is $(E_0, m_0, \eta_0, p_0)$ good and $L = l^{1 + r}$. There is a set $\mathcal P_{x_0, L, L_1, L_2}^{(E_0)}\in \mathcal F_{\Lambda_{L_2, L_1}(x_0)}$ such that 
    \[
    \PP\left\{\mathcal{P}^{(E_0)}_{x_0,L,L_1,L_2}\right\}\geq 1-2d\left(\frac{L_1+3L}{L}\right)^{d-1}(2^d)^{\frac{L_2-L_1-L}{L}}L^{-\hat{p}d\frac{L_2-L_1-L}{(3^d-1)L}}
    \]
    and for any $\omega\in \mathcal{P}^{(E_0)}_{x_0,L,L_1,L_2}$, $m<m'<m_0$, $|E - E_0|\leq e^{-m_0 l}$, we have 
    \[
    \dist(E, \sigma(H_{\Lambda_{L_2}})) \Wxoe \leq L_2^{2\nu} e^{-\frac{mL}{3}}.
    \]
\end{cor}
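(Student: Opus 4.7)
\medskip

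This corollary is the ``pgood analogue'' of Corollary \ref{cor-7a}: where that result built a shell out of good $\Lambda_l$ boxes and obtained $\dist(E,\sigma) W_\omega \lesssim e^{-ml/3}$, here we want to build a shell out of \emph{pgood} $\Lambda_L$ boxes (with $L = l^{1+r}$) and obtain the stronger decay $e^{-mL/3}$ while paying a slightly worse per-node probability $2^d L^{-\hat p d}$ instead of $l^{-p_0 d}$. The plan is to mirror the two-step structure of Corollary \ref{cor-7a}: first a deterministic implication that converts the existence of a pgood shell into the bound on $\dist(E, \sigma(H_{\Lambda_{L_2}})) \Wxoe$, and second a percolation estimate that shows such a shell exists in $\Lambda_{L_2, L_1}(x_0)$ with the claimed probability.

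For the deterministic step, define an $(\omega, L, E_0, m_0, \eta_0)$-pgood shell exactly as in Definition \ref{def-goodloop} but with pgood $\Lambda_L$ boxes at nodes of the coarse lattice $\mathcal C_L$. Suppose such a shell $\mathcal A$ is fully contained in $\Lambda_{L_2, L_1}(x_0)$, and let $m < m' < m_0$ with $|E - E_0| \leq e^{-m_0 l} \leq e^{-m' l}$. By the stability of pgood (Lemma \ref{lemma-5b}), every $\Lambda_L(r)$ with $r \in \mathcal A$ is $(\omega, E, m, \eta_0)$-jgood. Now repeat verbatim the argument of Lemma \ref{lemma-7a}: for any generalized eigenfunction $\phi_E$ set $\psi = \1_{A_{in}}$, note that $((H_{\omega,\Lambda_{L_2}}-E)\phi_E\psi)(x)$ vanishes unless $\dist(x,\mathcal A)\leq 2$, and at those points use the Poisson formula on the jgood box covering $x$ to obtain decay $e^{-m(L-2)}$. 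The standard volume factor $L_2^d$, combined with $\max_{\Lambda_{L_2,L_1}}|\phi_E(y)|\leq \langle L_2\rangle^\nu \Vert T_{x_0}^{-1}\phi_E\Vert$ and the definition of $W_\omega(x_0;E)$, yields $\dist(E,\sigma(H_{\Lambda_{L_2}}))\Wxoe \leq L_2^{2\nu} e^{-mL/3}$.

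For the probabilistic step, define $\mathcal P^{(E_0)}_{x_0, L, L_1, L_2}$ to be the event that a pgood shell is fully contained in $\Lambda_{L_2, L_1}(x_0)$; this event lies in $\mathcal F_{\Lambda_{L_2, L_1}(x_0)}$ since pgoodness of each $\Lambda_L(r)$ with $r$ in the relevant annular band is measurable with respect to the random variables inside $\Lambda_{L_2,L_1}(x_0)$. The complement corresponds to the existence of a ``not-pgood'' path in $\mathcal C_L$ from near $\partial \Lambda_{L_1}^+$ to near $\partial \Lambda_{L_2}^-$. Running the same union-bound as in Lemma \ref{lemma-7b} (with $l$ replaced by $L$, and using Lemma \ref{lemma-5a} in place of the good-scale estimate), each such path contains at least $N := (L_2 - L_1 - L)/L$ nodes, among which a subcollection of size $\geq N/(3^d - 1)$ has pairwise disjoint (hence independent) $\Lambda_L$ boxes, each ``not pgood'' with probability $\leq 2^d L^{-\hat p d}$. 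The number of candidate paths is bounded by $C(L_1/L + 3)^{d-1}(2^d)^N$. Multiplying gives exactly the stated probability bound.

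The main obstacle, and the only nontrivial adaptation relative to Corollary \ref{cor-7a}, is the independence step in the percolation count: pgoodness of $\Lambda_L(x)$ is measurable in $\mathcal F_{\Lambda_L(x)}$, and neighboring coarse-lattice sites (distance $\sim 3L/5$) yield overlapping $\Lambda_L$ boxes, so a naive union bound over bad paths counts dependent events. One must argue, as in Lemma \ref{lemma-7b}, that any bad path admits a subset of proportion $1/(3^d - 1)$ whose $\Lambda_L$ boxes are pairwise disjoint. This is the same combinatorial point as in the $l$-scale case, and carries over since it depends only on the ratio between node spacing ($\sim 3L/5$) and box side ($L$), which is unchanged. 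Everything else is a mechanical rescaling $l \leadsto L$, $p_0 \leadsto \hat p$ of the argument already given.
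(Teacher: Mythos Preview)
Your proposal is correct and follows essentially the same approach as the paper: define a pgood shell on the coarse lattice $\mathcal C_L$, use Lemma \ref{lemma-5b} (stability of pgood) in place of Lemma \ref{lemma: stability} to rerun the deterministic argument of Lemma \ref{lemma-7a} and obtain the $e^{-mL/3}$ bound, and then repeat the percolation estimate of Lemma \ref{lemma-7b} with $l\leadsto L$ and the per-node bad probability $2^d L^{-\hat p d}$ from Lemma \ref{lemma-5a}. The paper's proof is exactly this substitution argument, and your identification of the independence/disjointness issue and its resolution is the same as implicitly used there.
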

\begin{proof}
    Let $\mathcal P_{x_0, L, L_1, L_2}^{(E_0)}\in \mathcal F_{\Lambda_{L_2, L_1}(x_0)}$ denote the event 
    \[
    \{\omega:\text{there~is~an~}(\omega,L,E_0,m,s)\text{-pgood~loop~fully~contained~in~}\Lambda_{L_2,L_1}(x_0)\}
    \]
    where ``pgood-loop'' refers to $\mathcal C_L$ loop where each site has a $\Lambda_L$ pgood box (comparing to Definition \ref{def-goodloop}, \ref{def-Yset}). Since scale $l$ is $(E_0,  m_0, \eta_0, p_0)$-good, by Lemma \ref{lemma-5a}, 
    \[
    \PP\{\Lambda_L(x_0)\text{~is~}(\omega, E_0, m_0, \eta_0)\text{-pgood}\}\geq 1 - CL^{-\hat{p}d}, \quad \hat{p} = \frac{p_0 - r}{1 + r}.
    \]
    Thus one just needs to replace the ``good $\Lambda_l$ boxes'' with ``pgood $\Lambda_L$ boxes'', replace $p_0$ by $\hat{p}$, replace Lemma \ref{lemma: stability} by Lemma \ref{lemma-5b}, the proofs of Lemma \ref{lemma-7a}, \ref{lemma-7b} work directly. Notice that under the same level of perturbation $|E - E_0|\leq e^{-m'l}$, Corollary \ref{cor-6b} derived better result $e^{-mL/3}$ compared to Corollary \ref{cor-7a}, which is $e^{-ml/3}$. This is due to the fact that stability for a pgood set is stronger than stability for a good set, see Remark \ref{rmk-5a}.
\end{proof}

\section{Proof of key theorem}
\label{sec: two_spectral_reduction}
We prove Theorem \ref{thm: Key Theorem} in this section by performing two spectral reductions: Theorem \ref{side_1} and Theorem \ref{side_2}. Before that, we set up several constants that will be used in the proof of first and second spectral reduction: Let 
\begin{equation}
    \label{eq-7g}
N_1:= \min \{n\in \NN: 2^{\frac{1}{n}} - 1<p_0\}, \quad M = \frac{m_0}{30^{N_1 + 2}}
\end{equation}
and set $0<\rho, \beta<1$ and $N_2\in \NN$ to be such that  
\begin{equation}
    \label{eq-7d}
    (1 + p_0)^{-1} <\rho <1, \quad \beta = \rho^{N_2}, \quad (N_2 + 1)\beta <p_0 - p.
\end{equation}
Assume we are under the same assumptions as Theorem \ref{thm: Key Theorem} in the rest of this section.
\subsection{The first spectral reduction}\label{subsec: first_reduction}

\begin{thm}\label{side_1}
  Given any $b\geq 1$, there exists a constant $K_{d,p,b}\geq 1$ s.t. for any $K\geq K_{d,p,b}$, for large enough $L$, for any $x_0\in\ZZ^d$, there is an event $\mathcal{O}_{L, x_0}\in\mathcal{F}_{\Lambda_L(x_0)}$ such that 
  \[
  \PP\{\mathcal{O}_{L, x_0}\}\geq 1-L^{-2bd}. 
  \]
  and for any $\omega\in\mathcal{O}_{L, x_0}$, if 
  \begin{equation}
  \label{eq-7b}
  \Wxoe\geq e^{-30M\sqrt{L/K}}, \quad \dist(E, \mathcal I^c) \geq e^{-30M\sqrt{L/K}},
  \end{equation}
  then 
  \begin{equation}
      \label{eq-7a}
  \dist(E, \sigma^{(\mathcal I)}(H_{\omega, \Lambda_L}))\leq e^{-30ML/K}.
  \end{equation}
  where $\sigma^{(\mathcal{I})}(H) = \sigma(H)\cap \mathcal I$.
\end{thm}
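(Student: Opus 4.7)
The plan is to extract \eqref{eq-7a} by iterating the pgood percolation trap of Corollary \ref{cor-6b} along a chain of $N_1+1$ scales $L^{(0)}<L^{(1)}<\cdots<L^{(N_1)}\sim L$, with $L^{(i+1)}=(L^{(i)})^{1+r}$ for $r=2^{1/N_1}-1$, which is $<p_0$ by \eqref{eq-7g}. At step $i$ I would produce an event $\mathcal O_i\in\mathcal F_{\Lambda_{L^{(i)}}}$ on which every $E\in\mathcal I$ satisfying the hypothesis \eqref{eq-7b} is forced within distance $\delta_i$ of $\sigma^{(\mathcal I)}(H_{\omega,\Lambda_{L^{(i)}}})$. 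The governing idea is that step~$0$ replaces the continuum $\mathcal I$ by a controllable discrete set, after which step~$i+1$ takes the at most $(2L^{(i)})^d$ eigenvalues produced at step~$i$ as the new set of trapping centers $E_0$, in place of the (exponentially large) net needed in the base step.

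For the base step I would cover $\mathcal I$ by a net $\mathcal E_0$ with spacing $e^{-m_0\sqrt{L^{(0)}}/30}$ so that $|\mathcal E_0|\leq|\mathcal I|\,e^{m_0\sqrt{L^{(0)}}/30}$, and invoke Corollary \ref{cor-7a} at each $E_0\in\mathcal E_0$ with good shells of scale $\sqrt{L^{(0)}}$ in $\Lambda_{L^{(0)},L^{(0)}/2}(x_0)$. The per-$E_0$ failure probability \eqref{eq-5a} is $L^{-c\sqrt{L^{(0)}}}$, super-exponentially small in $\sqrt{L^{(0)}}$, so the union bound over $\mathcal E_0$ is absorbed and leaves $\mathcal O_0$ of probability at least $1-L^{-2bd}/(N_1+2)$; applying Lemma \ref{lemma-7a} at the net point nearest to $E$ gives $\dist(E,\sigma^{(\mathcal I)}(H_{\omega,\Lambda_{L^{(0)}}}))\leq e^{-m\sqrt{L^{(0)}}/100}$. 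For the inductive step $i\mapsto i+1$ I would enumerate the $\leq(2L^{(i)})^d$ eigenvalues of $H_{\omega,\Lambda_{L^{(i)}}}$; these are $\mathcal F_{\Lambda_{L^{(i)}}}$-measurable, hence independent of the pgood event $\mathcal P^{(E_0)}_{x_0,L^{(i+1)},L^{(i)},L^{(i+1)}}\in\mathcal F_{\Lambda_{L^{(i+1)},L^{(i)}}}$ furnished by Corollary \ref{cor-6b}. Conditioning on $\mathcal F_{\Lambda_{L^{(i)}}}$ and union-bounding over these eigenvalues delivers $\mathcal O_{i+1}\in\mathcal F_{\Lambda_{L^{(i+1)}}}$ of probability at least $1-L^{-2bd}/(N_1+2)$; there, using pgood $\Lambda_{L^{(i+1)}}$ boxes built from good $\Lambda_{L^{(i)}}$ boxes together with the stability window $e^{-m_0 L^{(i)}}$ of Lemma \ref{lemma-5b}, Corollary \ref{cor-6b} yields
\[
\dist(E,\sigma^{(\mathcal I)}(H_{\omega,\Lambda_{L^{(i+1)}}}))\,W_\omega(x_0;E)\leq (L^{(i+1)})^{2\nu}e^{-m_i L^{(i+1)}/3},
\]
where the effective rate $m_i$ drops by a factor of roughly $30$ per step, which is exactly what the margin in $M=m_0/30^{N_1+2}$ from \eqref{eq-7g} is designed to absorb. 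Taking $\mathcal O_{L,x_0}:=\bigcap_{i=0}^{N_1}\mathcal O_i\in\mathcal F_{\Lambda_L}$ and $K\geq K_{d,p,b}$ large then gives \eqref{eq-7a}.

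The hard part will be the bookkeeping of constants at the junctions between levels. First, one must check at every step that the $\delta_i$ produced is smaller than the perturbation window $e^{-m_0 L^{(i)}}$ demanded by Lemma \ref{lemma-5b} at step $i+1$; this is precisely where the generous factor $30^{N_1+2}$ in the definition of $M$ is spent, since the ``$m$'' in Corollary \ref{cor-6b}'s conclusion $e^{-mL/3}$ must be strictly less than the ``$m_0$'' in the next step's tolerance. Second, the union bound across the $N_1+2$ levels, together with the $\leq(2L^{(i)})^d$ eigenvalues per level, must stay dominated by the pgood failure probability $L^{-\hat p d(\cdot)}$, which is what forces the threshold $K\geq K_{d,p,b}$: taking $K$ large makes the ratios $L^{(i+1)}/L^{(i)}$ generous enough (while keeping $L^{(N_1)}\sim L/K$) that the polynomial eigenvalue count is swallowed by the super-polynomial failure exponent. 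Finally, the hypothesis $\dist(E,\mathcal I^c)\geq e^{-30M\sqrt{L/K}}$ is used to ensure that the nearest trapping eigenvalue at every step still lies inside $\mathcal I$, so that the truncation $\sigma^{(\mathcal I)}$ propagates coherently through the entire iteration and the final conclusion can be phrased in terms of $\sigma^{(\mathcal I)}(H_{\omega,\Lambda_L})$ rather than the full spectrum.
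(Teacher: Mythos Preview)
Your high-level bootstrap---net over $\mathcal I$, then iteratively replace the net by the eigenvalues of the box produced at the previous step, using the independence of the percolation event in the next annulus---is exactly the paper's. The gap is that you run a \emph{single} sequence $L^{(0)}<\cdots<L^{(N_1)}$ with $L^{(i+1)}=(L^{(i)})^{1+r}$ serving simultaneously as the \emph{box sizes} and the \emph{loop scales}. This fails geometrically: at step $i+1$ you ask for a pgood loop at scale $L^{(i+1)}$ inside the annulus $\Lambda_{L^{(i+1)},L^{(i)}}$, but a single $\Lambda_{L^{(i+1)}}$-box already has side comparable to the outer diameter, so no such loop is fully contained; correspondingly the exponent $(L_2-L_1-L)/L=(L^{(i+1)}-L^{(i)}-L^{(i+1)})/L^{(i+1)}$ in the probability bound of Corollary~\ref{cor-6b} is negative and the estimate is vacuous. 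It also fails at the endpoints: if $L^{(N_1)}=L$ then $L^{(0)}=\sqrt{L}$ and your base shell scale $\sqrt{L^{(0)}}=L^{1/4}$ is far below the $\sqrt{L/K}$ appearing in the hypothesis, so after dividing by $W_\omega(x_0;E)\ge e^{-30M\sqrt{L/K}}$ in Lemma~\ref{lemma-7a} your $\delta_0$ diverges; if instead $L^{(N_1)}\sim L/K$ as you write, the iteration terminates at $\sigma^{(\mathcal I)}(H_{\omega,\Lambda_{L/K}})$ rather than the required $\sigma^{(\mathcal I)}(H_{\omega,\Lambda_{L}})$.

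The paper decouples these two roles by carrying \emph{two} sequences: loop scales $l_k=l_{k-1}^{1+r}$ running from $l_0=\sqrt{L_0}$ up to $l_{N_1}=L_0$, and box sizes $L_k=L_{k-1}+2Jl_k$ running from $L_0$ up to $L_{N_1}=L\le KL_0$ with $K=1+2JN_1$. At step $k$ one places (p)good loops at the small scale $l_k\ll L_{k-1}$ inside the thin annulus $\Lambda_{L_k,L_{k-1}}$ of width $2Jl_k$, with trapping centers the $\le CL_{k-1}^d$ eigenvalues of $H_{\omega,\Lambda_{L_{k-1}}}$; the integer $J$ (hence $K=K_{d,p,b}$) is the parameter chosen large so that the percolation failure exponent absorbs this polynomial count. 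With this decoupling the base loop scale $l_0=\sqrt{L_0}\sim\sqrt{L/K}$ matches the hypothesis \eqref{eq-7b}, while the terminal box $\Lambda_{L_{N_1}}=\Lambda_L$ matches the conclusion \eqref{eq-7a}.
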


\begin{proof}
  The strategy here is two-fold:
  \begin{enumerate}
    \item Construct $\mathcal{O}_{L, x_0}$ by layers.
    \item Estimate the probability of the event $\mathcal{O}_{L, x_0}$.
  \end{enumerate}
  We first make some preparations:  Let $r>0$ be such that $(1 + r)^{N_1} = 2$. Given $L_0\in \RR$, we define 
  \[
  l_0 = \sqrt{L_0}, \quad l_k=l_{k-1}^{1+\eta} = (\sqrt{L_0})^{(1 + r)^k}, \quad \text{~for~} k=1,2,\cdots,N_1.
  \]
  In particular, $l_{N_1}=l_0^2=L_0$. Furthermore, let \[
  L_k=L_{k-1}+2Jl_{k}, \quad k = 1, \cdots, N_1, 
  \]
  where $J$ is a large constant that will be determined later. In particular, 
  \begin{equation}
      \label{eq-7c}
  L_{N_1}=L_0+2J\sum_{k=1}^{ N_1 }l_k\leq (1+2J N_1 )L_0 =:KL_0.
  \end{equation}
  Now we inductively construct $\mathcal O_{L, x_0}$: Given $L$ large enough, we find $L_0$ such that $L_{N_1}$ defined above equals $L$\footnote{The expression for $L_{N_1}$ depends continuously on $L_0$ so there must be a $L_0$ satisfying $L = L_{N_1}$.}.
\begin{enumerate}
  \item For the initial layer $\Lambda_{L_0}$, we pick $\mathcal{Y}^{E_{0,i}}_{x_0,l_0,\sqrt{L_0},L_0}\in \mathcal F_{\Lambda_L(x_0)}$ where $E_{0,i}$ are energies such that the union of $[E_{0,i}-e^{-m_0l_0},E_{0,i}+e^{-m_0l_0}]$ covers $\mathcal{I}$, indexed by $i$. We need to choose $\frac{|\mathcal{I}|}{2e^{-m_0l_0}}=O(e^{m_0\sqrt{L_0}})$ many of them.
  Let $\mathcal Y_0=\bigcap\limits_i \mathcal{Y}^{E_{0,i}}_{l_0,\sqrt{L_0},L_0}$, then there is $C, c$ such that 
  \[
    \PP\{\mathcal Y_0\}\geq 1-Ce^{m_0\sqrt{L_0}}L_0^{-c_{d,p}\sqrt{L_0}}\geq 1-CL_0^{-c\sqrt{L_0}}.
  \]
  \item Given $k = 1, \cdots N_1$, we define
  \[
  \mathcal Y_k := \bigcap_{E\in \sigma^{(\mathcal I)}(H_{\omega, \Lambda_{L_{k - 1}(x_0)}})} \mathcal Y_{l_k, L_{k - 1}, L_k}^{(E)} \in \mathcal F_{\Lambda_{L_k}(x_0)}\footnote{This is well-defined because $E$ only depends on $\mathcal F_{\Lambda_{L_{k - 1}}(x_0)}$ while $\mathcal Y_{l_k, L_{k - 1}, L_k}^{(E)} \in \mathcal F_{\Lambda_{L_k, L_{k - 1}}(x_0)}$. } 
  \]
  and 
  \[
  \mathcal O_{L, x_0} = \bigcap_{k = 0}^{N_1} \mathcal Y_k \in \mathcal F_{\Lambda_L(x_0)}.
  \]
\end{enumerate}
  By Lemma \ref{lemma-7b},  $\#(\sigma(H_{\omega, \Lambda_L})) = L^d$, we have 
  \[
  \begin{split}
  \PP(\mathcal Y_{l_k, L_{k - 1}, L_k}^{(E)}) &\geq 1 - C L_{k - 1}^d 2^{(2J - 1)d} l_k^{-p_0 d(2J - 1)/3^d} \geq 1 - CL_{N_1}^d 2^{2Jd} \cdot l_0^{-p_0 Jd/3^d}\\
  &\geq 1 - CL^d 2^{2Jd} L_0^{-p_0 Jd/3^{d +1}} \geq  1 - CL_0^{-5bd}
  \end{split}
  \]
  when we pick $J \geq J_0 :=  3^{d + 3} b$ and large enough $L_0$. Hence 
  \[
  \PP(\mathcal Y_k) \geq 1 - L_0^d L_0^{-5bd} \geq 1 - L_0^{-4bd},
  \]
  thus 
  \[
  \PP(\mathcal O_{L, x_0}) \geq 1 - N_1 L_0^{-4bd} \geq 1 - \left(\tfrac{L}{K}\right)^{-3bd}\geq 1 - L^{-2bd}, 
  \]
  when $L$ is large enough. Therefore we obtain the probability estimate as long as $J\geq J_0$, i.e. $K = 2N_1J + 1\geq K_{d,p,b}:= 2N_1 J_0 + 1$. It remains to prove \eqref{eq-7a}.
    First note that 
    \[
        \Wxoe \geq e^{-30M \sqrt{L/K}} \geq e^{-30M \sqrt{L_0}}, \qquad 30M = \frac{m_0}{30^{N_1 + 1}}.
    \]
    For any $E\in \mathcal I$, there is some $E_{0, i}$ such that $|E - E_{0, i}|\leq e^{-m_0 l_0}$. Since $\omega\in \mathcal Y_{l_0, \sqrt{L_0}, L_0}^{(E_{0,i})}$, by Corollary \ref{cor-6b}, this implies 
    \[
    \Wxoe \dist(E, \sigma^{(\mathcal I)}(H_{\omega, \Lambda_{L_0}(x_0)})) \leq e^{-\frac{m_0}{3}l_1}. 
    \]
    Since $\Wxoe\geq e^{-\frac{m_0}{30}l_1}$, thus 
    \[
    \dist(E, \sigma^{(\mathcal I)}(H_{\omega, \Lambda_{L_0}(x_0)})) \leq e^{ \left(-\frac{m_0}{3} + \frac{m_0}{30}\right)l_1}\leq e^{-\frac{m_0}{30}l_1} =: e^{-m_1 l_1}.
    \]
    Now there is some $E_{1, i}\in \sigma^{(\mathcal I)}(H_{\omega, \Lambda_{L_0}(x_0)})$ such that $|E - E_{1, i}|\leq e^{-m_1l_1}$. Since $\omega\in \mathcal Y_{l_1, L_0, L_1}^{(E_{1,i})}$, we can invoke Corollary \ref{cor-6b} again and repeat this process for $N_1$ times and we obtain 
  \[
    \text{dist}\left(E,\sigma^{(\mathcal{I})}(H_{\omega,\Lambda_{L_{N_1}}(x_0)})\right)\leq e^{-\frac{m_0}{30^{N_1 + 1}}l_{N_1 + 1}} \leq e^{-30 M \frac{L}{K}}.
  \]
\end{proof}

\subsection{Second spectral reduction}\label{subsec: second_reduction}
Recall that  $\rho<1$, $\beta<1$, $N_2$ were defined in \eqref{eq-7d}
\[
    (1 + p_0)^{-1} <\rho <1, \quad \beta = \rho^{N_2}, \quad (N_2 + 1)\beta <p_0 - p.
\]
In this subsection, given $L$, we let $L_0 = L$, $L_n = L^{\rho^n}$, $n = 1, \cdots, N_2$. In particular, $L_{N_2} = L^\beta$. 
\begin{definition}[reduced spectrum]\label{def-7a}
  The reduced spectrum of $H_\omega$ in $\Lambda_L(x_0)$, in the energy interval $\mathcal{I}$ is defined as
  \[
  \begin{split}
  &\sigma^{(\mathcal{I},red)}(H_{\omega,\Lambda_L(x_0)}):=\\
  &\left\{E\in\sigma^{(\mathcal{I})}(H_{\omega,\Lambda_L(x_0)}):\text{~dist~}\left(E,\sigma^{(\mathcal{I})}(H_{\omega,\Lambda_{L_n}(x_0)}\right)\leq 2e^{-\frac{30M}{K}L_n},n=1,\cdots,N_2\right\}
\end{split}
  \]
where $K$ is the constant given in Theorem \ref{side_1}.
\end{definition}

\begin{thm}[second spectral reduction]\label{side_2}
  Given any $b\geq 1$. Let $K\geq K_{d, p, b}$ be a constant as in Theorem \ref{side_1}. When $L$ is large enough, for each $x_0\in\ZZ^d$ there exists an event $\mathcal S _{L,x_0}\in \mathcal{F}_{\Lambda_L(x_0)}$, with
  \[
  \PP\{\mathcal S _{L,x_0}\}\geq 1-L^{-b\beta d},
  \]
  and for any $\omega\in\mathcal S_{L,x_0}$
  \begin{enumerate}
    \item If $E\in\mathcal{I}$ satisfies
    \begin{equation}\label{condition_2}
    W_\omega(x_0;E)>e^{-30M\sqrt{\frac{L^\beta}{K}}}\quad\text{and}\quad \text{dist~}(E,\mathcal{I}^c)>2^{-30M\sqrt{\frac{L^\beta}{K}}}
  \end{equation}
    then
    \begin{equation}\label{result_2}
    \text{dist}(E,\sigma^{(\mathcal{I},red)}(H_{\omega,\Lambda_L(x_0)}))\leq e^{-\frac{30M}{K}L}
  \end{equation}
    \item and we have
    \begin{equation}\label{result_3}
    \#\sigma^{(\mathcal{I},red)}(H_{\omega,\Lambda_L(x_0)})\leq CL^{( N_2 +1)\beta d}
  \end{equation}
  \end{enumerate}
\end{thm}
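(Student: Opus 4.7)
The plan is to iterate the first spectral reduction (Theorem \ref{side_1}) along the decreasing chain of scales $L_0 = L > L_1 > \cdots > L_{N_2} = L^\beta$, where $L_n = L^{\rho^n}$. Define the event
\[
\mathcal{S}_{L,x_0} := \bigcap_{n=0}^{N_2} \mathcal{O}_{L_n, x_0},
\]
with $\mathcal{O}_{L_n, x_0} \in \mathcal{F}_{\Lambda_{L_n}(x_0)} \subseteq \mathcal{F}_{\Lambda_L(x_0)}$ the event supplied by Theorem \ref{side_1} at scale $L_n$ (with the same $b$ and $K$). Since $\PP(\mathcal{O}_{L_n, x_0}) \geq 1 - L_n^{-2bd} = 1 - L^{-2b\rho^n d} \geq 1 - L^{-2b\beta d}$ for every $n$, a union bound gives $\PP(\mathcal{S}_{L, x_0}) \geq 1 - (N_2 + 1) L^{-2b\beta d} \geq 1 - L^{-b\beta d}$ for $L$ large enough.

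To prove \eqref{result_2}, fix $\omega \in \mathcal{S}_{L, x_0}$ and $E$ satisfying \eqref{condition_2}. For every $n \in \{0, 1, \ldots, N_2\}$ we have $L^\beta \leq L_n \leq L$, so the hypotheses \eqref{eq-7b} of Theorem \ref{side_1} at scale $L_n$ are implied by \eqref{condition_2}. Hence
\[
\dist\bigl(E, \sigma^{(\mathcal{I})}(H_{\omega, \Lambda_{L_n}(x_0)})\bigr) \leq e^{-30M L_n / K}, \qquad n = 0, 1, \ldots, N_2.
\]
Taking $n = 0$, choose $E' \in \sigma^{(\mathcal{I})}(H_{\omega, \Lambda_L(x_0)})$ with $|E - E'| \leq e^{-30ML/K}$. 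For each $n = 1, \ldots, N_2$, the triangle inequality and $L_n \leq L$ yield
\[
\dist\bigl(E', \sigma^{(\mathcal{I})}(H_{\omega, \Lambda_{L_n}(x_0)})\bigr) \leq e^{-30ML/K} + e^{-30M L_n/K} \leq 2 e^{-30M L_n / K},
\]
so $E' \in \sigma^{(\mathcal{I}, \red)}(H_{\omega, \Lambda_L(x_0)})$ and $|E - E'| \leq e^{-30ML/K}$, establishing \eqref{result_2}.

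For the counting bound \eqref{result_3}, every $E \in \sigma^{(\mathcal{I}, \red)}(H_{\omega, \Lambda_L(x_0)})$ lies within $2e^{-30M L_{N_2}/K}$ of some $\lambda \in \sigma^{(\mathcal{I})}(H_{\omega, \Lambda_{L_{N_2}}(x_0)})$, and this outer set has at most $|\Lambda_{L_{N_2}}| \leq CL^{\beta d}$ elements. The remaining task is to bound, for each such $\lambda$, the number of reduced eigenvalues within $2e^{-30M L_{N_2}/K}$ of $\lambda$. My plan is to iterate through the intermediate scales $L_{N_2 - 1}, \ldots, L_1$: at each scale $L_n$ the fiber is further restricted to a union of intervals of total length $\leq 4 L^{\rho^n d} e^{-30ML_n/K}$ around $\sigma^{(\mathcal{I})}(H_{\omega, \Lambda_{L_n}(x_0)})$; composing these $N_2$ nested constraints with the rank bounds $|\Lambda_{L_n}|$ should produce the aggregate bound $CL^{(N_2 + 1)\beta d}$. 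This multi-scale count is the main obstacle: without a Wegner-type estimate, one cannot directly convert the minuscule length of the admissible set into an eigenvalue count, so the argument must carefully propagate the strict nesting of spectral approximations through all intermediate scales in order to extract precisely the exponent $(N_2+1)\beta d$ rather than a coarser bound.
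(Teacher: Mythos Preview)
Your treatment of part (1) and the probability estimate is correct and essentially identical to the paper's: set $\widetilde{\mathcal O}_{L,x_0}=\bigcap_{n=0}^{N_2}\mathcal O_{L_n,x_0}$, apply Theorem \ref{side_1} at each scale, and use the triangle inequality to show the nearby $E'\in\sigma^{(\mathcal I)}(H_{\omega,\Lambda_L})$ actually lies in the reduced spectrum.

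The gap is in part (2). You correctly diagnose the obstacle---no Wegner estimate means interval length cannot be converted to an eigenvalue count---but your proposed ``nested constraint'' iteration does not overcome it, and in fact cannot: knowing that the admissible energy set at each step has tiny Lebesgue measure gives no control on how many eigenvalues of $H_{\omega,\Lambda_{L_{k-1}}}$ fall in it. The paper resolves this by a completely different mechanism. It enlarges the event $\mathcal S_{L,x_0}$ to include, besides the $\mathcal O_{L_n,x_0}$, additional \emph{notsobad} events $\mathcal N_{\Lambda_{L_{n-1},L_n}(x_0)}$ (Lemma \ref{side_2_2}): for each eigenvalue $E_n\in\sigma(H_{\omega,\Lambda_{L_n}})$, the annulus $\Lambda_{L_{n-1},L_n}$ is covered, outside a small \emph{singular set} of volume $\leq CL^{\beta d}$, by boxes that are good at $E_n$ at some intermediate scale. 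The counting argument (Lemma \ref{lemma: number of reduced spectrum}) then proceeds recursively: given a chain $\{E_n\}_{n=k}^{N_2}$, any eigenvalue $E$ of $H_{\omega,\Lambda_{L_{k-1}}}$ that extends the chain has its normalized eigenfunction $\phi_E$ decaying off the accumulated singular set $\Theta_k^{N_2}$, so $\sum_{x\in\Theta_k^{N_2}}|\phi_E(x)|^2\geq 1/2$. A trace inequality
\[
\#B_{k-1}(\bfek)\cdot\tfrac12 \;\leq\; \Tr\bigl(\1_{\Theta_k^{N_2}}P_{\mathcal I}(H_{\omega,\Lambda_{L_{k-1}}})\bigr)\;\leq\; |\Theta_k^{N_2}|\;\leq\; CL^{\beta d}
\]
then bounds the branching factor at each of the $N_2$ steps by $CL^{\beta d}$, yielding $CL^{(N_2+1)\beta d}$ overall. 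This eigenfunction-concentration-plus-trace idea is the missing ingredient; without it (or something playing the same role) your event $\bigcap_n\mathcal O_{L_n,x_0}$ is too large for \eqref{result_3} to hold for every $\omega$ in it.
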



To obtain \eqref{result_2} from \eqref{condition_2}, one only needs to consider the event $\widetilde {\mathcal O}_{L, x_0}=\bigcap\limits_{n=0}^{N_2}\mathcal{O}_{L_n, x_0}$. By \eqref{condition_2}, $\Wxoe \geq e^{-30M \sqrt{L_n}}$ for each $n = 1, \cdots, N_2$. By Theorem \ref{side_1}, $\dist(E, \sigma^{(\mathcal I)}(H_{\omega, \Lambda_{L_n}(x_0)}))\leq e^{-30M \frac{L_n}{K}}$ for each $n$. Then \eqref{result_2} follows from Definition \ref{def-7a}. 

Therefore it remains to find an event such that \eqref{result_3} holds. We first make some preparations and then present the result in Lemma \ref{lemma: number of reduced spectrum}. 

\medskip  


Given\footnote{This $L$ is an arbitrary $L$. It is not necessarily the one in Theorem \ref{side_2}} $L'<L$ with $L^\rho<\frac{L - L'}{7}$, let $\rho$, $N_2$ be the one in \eqref{eq-7d} and $L_n=L^{\rho^n}$ for $n=0,1,2,\cdots, N_2 $. Let 
\[
\begin{split}
    R_n := \mathcal C_{L_n} \cap \Lambda_L(x_0),\qquad 
    \mathcal{R}_n := \{\Lambda_{L_n}(r)\}_{r\in R_n}.
\end{split}
\]
Recall that $\Lambda_l$ boxes centered at coarse lattices $\mathcal C_l$ cover the whole space. In particular, $\mathcal R_{n}$ covers $\Lambda_L(x_0)$ for any $n$. 

Given $K_2\in \mathbb{N}$ (where $K_2$ will be chosen later), we define
\begin{definition}
The annulus $\Lambda_{L, L'}(x_0)$ is $(\omega,E,K_2)$-notsobad if there are at most $K_2$ points in $R_{ N_2 }$, denoted by $r_i, 1\leq i\leq K_2$, s.t. $\forall x\in \Lambda_{L, L'}(x_0)\setminus \left(\bigcup_{r_i}\Lambda_{3L_{ N_2 }}(r_i)\right)$,
there exists some $n\in\{1,2,\dots, N_2\}$, such that  $\Lambda_{L_{n}}(r)\in \mathcal R_n$ is a $(\omega,E,m_0,\eta_0)$-good box and $\Lambda_{\frac{L_n}{5}}(x) \cap \Lambda_{L, L'} \subset \Lambda_{L_n}(r)$. 

An event $\mathcal{N}$ is $(\Lambda_{L, L'}(x_0),E,K_2)$-notsobad if $\mathcal{N}\in\mathcal{F}_{\Lambda_{L, L'}(x_0)}$, and $\Lambda_{L, L'}(x_0)$ is $(\omega,E,K_2)$-notsobad for all $\omega\in\mathcal{N}$.
\end{definition}
\begin{remark}\label{rmk-singularset}
    $\Theta:= \bigcup_{r_i}\Lambda_{3L_{ N_2 }}(r_i)$ is called the singular set and the above definition captures the fact that outside of the singular set, each point is good in at least one level $L_n$, $n\in\{1,2,\dots, N_2 \}$.
\end{remark} 

\begin{lemma}\label{side_2_2}
  Assume $K_2$ and $L$ are large enough, $L^\rho< \frac{L - L'}{7}$. Given a fixed $E\in\mathcal{I}$, there exists a $(\Lambda_{L, L'}(x_0),E,K_2)$-notsobad event $\mathcal{N}_{\Lambda_{L, L'}(x_0)}^{(E)}$ with
  \[
  \PP\{\mathcal{N}_{\Lambda_{L, L'}(x_0)}^{(E)}\}>1-L^{-5bd}
  \]
\end{lemma}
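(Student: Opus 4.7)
\textbf{Proof proposal for Lemma \ref{side_2_2}.}

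The plan is to define the target event $\mathcal{N}_{\Lambda_{L,L'}(x_0)}^{(E)}$ as a multi-scale good event: the intersection over $n=1,\ldots,N_2$ of the events that the number of $(\omega,E,m_0,\eta_0)$-bad boxes $\Lambda_{L_n}(r)$ with $r\in R_n$ is controlled. The goal on this event is twofold: (i) show its probability is $\geq 1-L^{-5bd}$ via independent-block estimates, and (ii) show deterministically that the ``fully bad'' set (points having \emph{no} good covering box at \emph{any} scale) can be enclosed in at most $K_2$ boxes $\Lambda_{3L_{N_2}}(r_i)$ with $r_i\in R_{N_2}$. The geometry matches: since any fully bad $x$ must at every scale $n$ have all admissible covering boxes $\Lambda_{L_n}(r)$ bad, the fully bad set is contained in $\bigcap_{n=1}^{N_2}\bigcup_{r\in \mathcal{Z}_n}\Lambda_{L_n}(r)$, where $\mathcal{Z}_n$ is the set of bad nodes at scale $n$ relevant to $\Lambda_{L,L'}(x_0)$ (using that $L^{\rho}<(L-L')/7$ to guarantee that the shells fit).

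For the probability estimate, first decompose each $R_n$ into $O_d(1)$ many disjoint sub-lattices whose boxes $\Lambda_{L_n}(r)$ have pairwise disjoint support, so that their bad events are i.i.d. Bernoulli with success probability $\leq L_n^{-p_0 d}$. Applying a standard Binomial tail bound, $\mathbb{P}[|\mathcal{Z}_n|>t_n]\leq (C\,\mathbb{E}[|\mathcal{Z}_n|]/t_n)^{t_n}$ with $\mathbb{E}[|\mathcal{Z}_n|]\lesssim (L/L_n)^d L_n^{-p_0 d}=L^{d(1-\rho^n(1+p_0))}$. Because $\rho(1+p_0)>1$, for $n=1$ this expectation is already a negative power of $L$, so a constant $t_1=t_1(d,p_0,\rho,b)$ yields probability $\leq L^{-5bd}/N_2$; for $n\geq 2$ one picks $t_n$ to be a suitable polynomial in $L$ (still yielding the same tail level $L^{-5bd}/N_2$) using the flexibility in the Binomial exponent. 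Union-bounding over the $N_2$ scales gives $\mathbb{P}(\mathcal{N}^{(E)})\geq 1-L^{-5bd}$.

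The deterministic cover step is where the calibration of \eqref{eq-7d} kicks in. Writing the fully bad set as $F=\bigcap_n U_n$ with $U_n=\bigcup_{r\in\mathcal{Z}_n}\Lambda_{L_n}(r)$, one inductively bounds, for each $n$, the number of $R_n$-nodes whose $\Lambda_{L_n}$-box meets $\bigcap_{k\leq n}U_k$: at the top scale there are at most $t_1$ such nodes; descending from scale $n-1$ to $n$ multiplies this count by at most $(L_{n-1}/L_n)^d t_n / \mathbb{E}[|\mathcal{Z}_n|]$-type factor, and the choices $\rho>(1+p_0)^{-1}$ and $(N_2+1)\beta<p_0-p$ are precisely what make the telescoping product close with a final count bounded by an $L$-independent constant $K_2=K_2(d,p_0,\rho,\beta,b,N_2)$ at scale $N_2$. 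Inflating each enclosing $\Lambda_{L_{N_2}}(r_i)$ to $\Lambda_{3L_{N_2}}(r_i)$ absorbs the admissibility requirement $\Lambda_{L_{N_2}/5}(x)\cap \Lambda_{L,L'}\subset\Lambda_{L_{N_2}}(r)$ for boundary points.

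The main obstacle is the deterministic counting in the cover step: one must choose the thresholds $\{t_n\}$ to simultaneously achieve the per-scale probability bound $L^{-5bd}/N_2$ \emph{and} keep the telescoping cover count bounded independently of $L$. The probability piece and the combinatorial piece pull in opposite directions (larger $t_n$ helps probability, hurts covering), and the interval of parameters $(1+p_0)^{-1}<\rho<1$, $(N_2+1)\beta<p_0-p$ is calibrated precisely for such a balance to exist. Once the calibration is carried out, the remaining estimates are bookkeeping.
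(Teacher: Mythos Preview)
There is a genuine gap in the deterministic cover step. Your event is $\bigcap_n \{|\mathcal{Z}_n| \leq t_n\}$, a purely \emph{global} constraint on the number of bad boxes at each scale, and you claim that on this event the fully bad set $F = \bigcap_n U_n$ is contained in $O(1)$ boxes of size $L_{N_2}$. This cannot be achieved by global counts alone when $N_2 \geq 2$, which is the generic case under the calibration \eqref{eq-7d}. Indeed, since $\beta(1+p_0) = \rho^{N_2}(1+p_0) < 1$ for $N_2 \geq 2$, one has $\mathbb{E}[|\mathcal{Z}_{N_2}|] \asymp L^{d(1-\beta(1+p_0))} \to \infty$, forcing $t_{N_2}$ to grow polynomially in $L$ in order to secure probability $1 - L^{-5bd}$. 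But deterministically, nothing prevents all $t_{N_2}$ bad boxes at the finest scale from sitting inside $U_1 \cap \cdots \cap U_{N_2-1}$; in that worst case $F$ genuinely requires $\sim t_{N_2}$ boxes of size $L_{N_2}$ to be covered --- a polynomial in $L$, not a constant. The ``$(L_{n-1}/L_n)^d\, t_n/\mathbb{E}[|\mathcal{Z}_n|]$-type factor'' you invoke is an averaging heuristic (bad boxes spread proportionally), not a deterministic bound valid on the event $\{|\mathcal{Z}_n| \leq t_n\}$.

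The paper avoids this by making the event \emph{hierarchical} rather than global: for every $n$ and every $r \in R_{n-1}$, the number of (disjoint) bad $\Lambda_{L_n}(s)$ with $s \in R_n(r)$ is at most a fixed constant. This per-parent-box constraint is precisely what makes the telescoping honest: starting from $|R_0| = O(1)$ boxes at scale $L_0 = L$ and descending, the number of scale-$N_2$ boxes that can host a fully bad point is at most $O(1)\cdot K_2^{N_2}$, an $L$-independent constant. The probability estimate is essentially the same binomial-type bound you describe, but applied \emph{locally} within each $R_n(r)$ (which has $\sim (L_{n-1}/L_n)^d$ nodes) and then union-bounded over $r \in R_{n-1}$ and over $n$; this yields $\leq L^{-5bd}$ once the per-box threshold is chosen large enough, using $\rho(1+p_0) > 1$. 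The fix to your argument is therefore not a recalibration of $\{t_n\}$ but a change of event: impose the bad-box bound box-by-box at the previous scale, not globally.
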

\begin{proof}
    Given $\Lambda_{L_{n - 1}}(r) \in \mathcal R_{n - 1}$, we set 
    \[
    \begin{split}
    &\mathcal R_n(r) := \{ \Lambda_{L_n}(s): \Lambda_{L_n}(s) \cap  \Lambda_{L_{n - 1}}(r) \neq \emptyset\}, \\
    &R_n(r) := \{s\in R_n: \Lambda_{L_n}(s) \in \mathcal R_n(r)\}. 
    \end{split}
    \]
    By definition, $\Lambda_{L_{n - 1}}(r)\subset \bigcup\limits_{s\in R_n(r)}\Lambda_{L_n}(s)$.  and by the definition of coarse lattice, $\# R_n(r) \leq \left(\frac{3L_{n - 1}}{L_n}\right)^d$. Let $N_{n -1}(r)$ denote the number of bad boxes among $\Lambda_{L_n}(s)$. Let
    \[
    \mathcal N_{\Lambda_{L, L'}}^{(E)}:= \bigcap_{n = 1}^{N_2} \bigcap_{r\in R_{n - 1}} \bigcap_{k = 1}^{K_2} \mathcal N_{n,r,k}
    \]
    where $\mathcal N_{n,r,k}$ denote the set of $\omega$ where all $\Lambda_{L_n}(s)$ boxes with $s\in R_n(r)$ except at most $K_2$ many disjoint ones, are $(\omega, E, m_0, \eta_0)-$good. It is clear by the definition that it is a $(\Lambda_{L, L'}, E, K_2)$-notsobad set. Furthermore, we can estimate the probability of the complement set 
    \[
    \begin{split}
        \PP\left\{\left(\mathcal N_{\Lambda_{L, L'}}^{(E)}\right)^c\right\} &\leq \sum_{n = 1}^{N_2} \left(\frac{3(L - L')}{L_{n - 1}}\right)^d \left( \frac{3L_{n - 1}}{L_{n}}\right)^{K_2d} L_n^{-K_2p_0 d}\\
        &\leq 2^d3^{K_2d} N_2 L^{-\rho^{N_2 - 1}(K_2+ (\rho(p_0d + d) - d) + d) + d} \leq L^{-5b d}
    \end{split}
    \]
    where we choose $K_2$ large enough but fixed and then $L$ large enough for the last inequality. 
\end{proof}
Let
\[
\mathcal{N}_{\Lambda_{L, L'}(x_0)}=\bigcap_{E\in\sigma^{(\mathcal I)}(H_{\omega,\Lambda_{L'}})} \mathcal{N}_{\Lambda_{L, L'}(x_0)}^{(E)}\in\mathcal{F}_{\Lambda_L}.
\]
By Lemma \ref{side_2_2}, $\PP(\mathcal N_{\Lambda_{L, L'}})\geq 1 - L^{-4b d}$ when $L$ is large enough. 

\begin{lemma}\label{lemma: number of reduced spectrum}
    Given $b\geq 1$, when $L$ is large enough, for any $x_0\in \ZZ^d$, there is $\mathcal N_{L, x_0}$ such that 
    \[
    \PP\{\mathcal N_{L, x_0}\} > 1 - CL^{-4b\beta d}
    \]
    and for any $\omega \in\mathcal N_{L, x_0}$, 
    \begin{equation}
        \label{eq-7e}
    \# \sigma^{(\mathcal I, \red)}(H_{\omega, \Lambda_L(x_0)})\leq C L^{(N_2 + 1)\beta d}. 
    \end{equation}
\end{lemma}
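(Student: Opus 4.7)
The plan is to construct $\mathcal N_{L,x_0}$ by intersecting the notsobad events from Lemma \ref{side_2_2} over every pair of consecutive sub-scales $L_{n-1},L_n$ and over every eigenvalue at scale $L_n$, then to prove \eqref{eq-7e} by iterating a one-level multiplicity bound of size $CL^{\beta d}$ extracted from the notsobad singular set.

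\textbf{Construction of the event.} Set
\[
\mathcal N_{L,x_0} := \bigcap_{n=1}^{N_2} \bigcap_{E\in\sigma^{(\mathcal I)}(H_{\omega,\Lambda_{L_n}(x_0)})} \mathcal N^{(E)}_{\Lambda_{L_{n-1},L_n}(x_0)},
\]
where $\mathcal N^{(E)}_{\Lambda_{L_{n-1},L_n}(x_0)}\in\mathcal F_{\Lambda_{L_{n-1},L_n}(x_0)}$ is the notsobad event from Lemma \ref{side_2_2}. At each scale there are at most $L_n^d\leq L^d$ eigenvalues, and each individual notsobad event has probability $\geq 1-L^{-5bd}$, so a union bound over the at most $N_2L^d$ events yields $\PP\{\mathcal N_{L,x_0}^c\}\leq N_2 L^{d-5bd}\leq CL^{-4b\beta d}$ for $L$ large, using $\beta<1$ and $b\geq 1$.

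\textbf{One-level multiplicity bound.} For $\omega\in\mathcal N_{L,x_0}$, $n\in\{1,\ldots,N_2\}$, and $E\in\sigma^{(\mathcal I)}(H_{\omega,\Lambda_{L_n}(x_0)})$, the goal is
\[
m_n(E) := \#\{E'\in\sigma^{(\mathcal I)}(H_{\omega,\Lambda_{L_{n-1}}(x_0)}) : |E'-E|\leq 2e^{-30ML_n/K}\} \leq CL^{\beta d}.
\]
Let $\Theta_E\subset\Lambda_{L_{n-1},L_n}(x_0)$ be the singular set of the notsobad decomposition at energy $E$, satisfying $|\Theta_E|\leq K_2(6L^\beta+1)^d\leq CL^{\beta d}$. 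For any eigenfunction $\psi$ of $H_{\omega,\Lambda_{L_{n-1}}(x_0)}$ with eigenvalue $E'$ in the window, every point $x\in\Lambda_{L_{n-1},L_n}(x_0)\setminus\Theta_E$ sits deep inside some $(\omega,E,m_0,\eta_0)$-good sub-box at a scale $L^{\rho^k}$; the perturbation $|E'-E|\leq 2e^{-30ML_n/K}$ is small enough to invoke Lemma \ref{lemma: stability} at every such sub-scale, so all these boxes remain jgood at the perturbed energy $E'$, and Poisson's formula \eqref{eq: Poisson} gives $|\psi(x)|\leq Ce^{-cL^{\rho^k}}\Vert\psi\Vert$. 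Hence $\psi$ is exponentially concentrated on $\Lambda_{L_n}(x_0)\cup\Theta_E$, and a Weyl-type rank argument (projecting onto this set and using that the $\Lambda_{L_n}(x_0)$-part must be an approximate eigenfunction of $H_{\omega,\Lambda_{L_n}(x_0)}$ at the \emph{fixed} eigenvalue $E$, contributing only $O(1)$ new dimensions per $E$) yields $m_n(E)\leq|\Theta_E|+O(1)\leq CL^{\beta d}$.

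\textbf{Iteration and conclusion.} Every $E_0\in\sigma^{(\mathcal I,\red)}(H_{\omega,\Lambda_L(x_0)})$ comes with a chain $E_n\in\sigma^{(\mathcal I)}(H_{\omega,\Lambda_{L_n}(x_0)})$ satisfying $|E_{n-1}-E_n|\leq 2e^{-30ML_n/K}$ for every $n$, by Definition \ref{def-7a}. Writing $T_n$ for those $E_n$ extending to a full chain down to level $N_2$, we have $|T_{n-1}|\leq\sum_{E\in T_n}m_n(E)\leq CL^{\beta d}|T_n|$, and iterating yields
\[
\#\sigma^{(\mathcal I,\red)}(H_{\omega,\Lambda_L(x_0)}) = |T_0| \leq (CL^{\beta d})^{N_2}|T_{N_2}| \leq (CL^{\beta d})^{N_2}\cdot L^{\beta d} \leq CL^{(N_2+1)\beta d}.
\]
The main obstacle is the multiplicity step: making the exponential decay genuinely propagate across chains of good boxes at varying sub-scales $L^{\rho^k}$ so as to cover all of $\Lambda_{L_{n-1},L_n}(x_0)\setminus\Theta_E$ with total error controlled well below $e^{-cL_n}$, and then converting the resulting near-support statement into the sharp dimension estimate $|\Theta_E|+O(1)$; correctly handling the $\Lambda_{L_n}(x_0)$-contribution, where one must exploit that $E$ is a single eigenvalue rather than a whole spectral band, is the most delicate piece of bookkeeping.
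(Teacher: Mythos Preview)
Your event construction, probability bound, and chain-iteration skeleton match the paper, but the one-level multiplicity bound $m_n(E)\leq|\Theta_E|+O(1)$ has a real gap, precisely at the step you flag as delicate. Saying that $\psi|_{\Lambda_{L_n}}$ is an approximate eigenfunction of $H_{\omega,\Lambda_{L_n}}$ at $E'$ only places it in the range of the spectral projection of $H_{\omega,\Lambda_{L_n}}$ onto a window of width $\sim e^{-30ML_n/K}$ around $E$; it does not pin it to the single eigenspace at $E$. To get the claimed $O(1)$ you would need to know that this window contains only $O(1)$ eigenvalues of $H_{\omega,\Lambda_{L_n}}$, i.e.\ a Wegner/Minami-type estimate, which is exactly what is unavailable in the singular (Bernoulli) setting this paper is written for. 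Without it the trace argument only yields $m_n(E)\leq C(L_n^d+L^{\beta d})$, and for $n<N_2$ the term $L_n^d$ destroys the iteration.

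The paper circumvents this entirely by a different bookkeeping: instead of bounding $m_n(E)$ for a single $E$ at level $n$, it bounds the number of one-step extensions $\#B_{k-1}(\mathbf E^{(k)})$ of a \emph{full chain} $\mathbf E^{(k)}=(E_k,\ldots,E_{N_2})$. For such a chain one forms the \emph{cumulative} singular set
\[
\Theta_k^{N_2}=\Lambda_{L_{N_2}}(x_0)\cup\bigcup_{n=k}^{N_2}\Theta_{n},
\]
using the notsobad structure at energy $E_n$ in each annulus $\Lambda_{L_{n-1},L_n}$. Every $x\in\Lambda_{L_{k-1}}\setminus\Theta_k^{N_2}$ lands in some annulus $\Lambda_{L_{n_x-1},L_{n_x}}\setminus\Theta_{n_x}$ and is therefore covered by a box that is good at $E_{n_x}$; Lemma~\ref{lemma: stability} and the chain inequality $|E-E_{n_x}|\leq 2e^{-30ML_{n_x}/K}$ make that box good at the new eigenvalue $E$ as well, so $|\phi_E(x)|$ is sub-exponentially small. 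Hence every $\phi_E$ with $E\in B_{k-1}(\mathbf E^{(k)})$ carries at least half its mass on $\Theta_k^{N_2}$, and the trace argument gives $\#B_{k-1}(\mathbf E^{(k)})\leq 2|\Theta_k^{N_2}|\leq CL^{\beta d}$. The point is that the innermost box $\Lambda_{L_{N_2}}$ already has volume $L^{\beta d}$, so no approximate-eigenfunction reduction on the inner region is ever needed; the chain carries you all the way down to a box whose raw volume is the bound you want.
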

\begin{proof}
Recall that $L_0 = L$, $L_n = L^{\rho^n}$, $L_{N_2} = L^\beta$. Define 
    \[
    \mathcal{N}_{L,x_0}=\bigcap_{n=0}^{ N_2 }\mathcal{N}_{\Lambda_{L_{n-1},L_n(x_0)}}.
    \]
    Then $\PP(\mathcal N_{L, x_0}) \geq 1 - N_2 L_{N_2}^{-4b d} \geq 1 - L^{-3b\beta d}$ when $L$ is large enough. It remains to show \eqref{eq-7e}.  Denote 
 \[
 D_k^{N_2}:= \left\{ \{E_n\}_{n = k}^{N_2}: E_n\in \sigma(H_{\omega, \Lambda_{L_n}}),~ |E_i - E_j|\leq 2e^{-\frac{30M}{K}L_{\max\{i,j\}}}\right\}. 
 \]
 Then by definition, 
\begin{equation}
  \begin{split}
  \#\sigma^{(\mathcal{I},red)}(H_{\omega,\Lambda_{L}}) \leq  \# D_0^{ N_2 }
\end{split}
\end{equation}
  We can count the RHS by layers inductively. We start with the layer $L_{ N_2 }$ and we omit $x_0$ and $\omega$ below for convenience. We first have 
  \[
  \#D_{ N_2 }^{ N_2 }=\#\sigma^{(\mathcal I)}(H_{\omega,\Lambda_{L_{ N_2 }}})\leq C(L_{ N_2 })^d = CL^{\beta d}.
  \]
  Given $\bfek := \{E_n\}_{k}^{ N_2 }\in D_k^{ N_2 }$, we compute the number of $E$ in 
  \[
  B_{k - 1}(\bfek):= \{E:\text{if~} E_{k-1}=E, \text{~then~}\{E_n\}_{k-1}^{ N_2 }\in D_{k-1}^{ N_2 }\}.
  \]
  Then we get the recursion relation
  \begin{equation}
      \label{eq-7f}
  \# D_{k - 1}^{N_2} \leq \# D_{k}^{N_2} \times \left(\max_{\bfek \in D_k^{N_2}} \# B_{k - 1}(\bfek)\right).
  \end{equation}
  Since $\omega\in\mathcal{N}_{\Lambda_{L_{n-1},L_n}}$ for any n, we see that $\Lambda_{L_{n-1},L_n}$ is an $(\omega,L_{n-1},L_n,E_n)$-notsobad set. Let $\Theta_n$ be the corresponding singular set, see Remark \ref{rmk-singularset}. Notice $\Theta_n\subset \Lambda_{L_{n - 1}, L_n}$, and set $\Theta_k^{ N_2 }=\bigcup_{n=k}^{ N_2 }\Theta_n\cup\Lambda_{L_{ N_2 }}$. Hence we have
  \[
  |\Theta_k^{ N_2 }|\leq L_{ N_2 }^d+\sum_{n=k}^{ N_2 }K_2(3(L_{k-1})_{ N_2 })^d=L^{\beta d}+( N_2 -k+1)K_23^dL^{\rho^{k-1}\beta d}\leq CL^{\beta d}.
  \]
  Now if $E\in B_{k-1}(\bfek)$, then for any $ x \in\Lambda_{L_{k-1}}\setminus\Theta_{k}^{ N_2 }$, by the definition of $\Theta_k^{N_2}$ and $\Theta_n \subset \Lambda_{L_{n - 1}, L_n}$, there is $n_x\in\{k,k+1,\dots, N_2 \}$, s.t.
  $x\in\Lambda_{L_{n_x-1},L_{n_x}}\setminus\Theta_{n_x}$. Then by definition of singular set (Remark \ref{rmk-singularset}), there exists a  $(\omega,E_{n_x},m_0,\eta_0)$-good box $\Lambda_{(L_{n_x-1})_j}$ containing $x$ for some $j\in{1,2,\dots, N_2 }$, where $(L_{n_x-1})_j=L^{\rho^{n_x+j-1}}$.
  Since $$|E-E_{n_x}|\leq e^{-\frac{30M}{K}L_{n_x}}\leq e^{-\frac{30M}{K}L^{\rho^{n_x}}}\leq e^{-\frac{30M}{K}(L_{n_x})_j},$$
 $\Lambda_{(L_{n_x-1})_j}$ is also $(\omega,E,\frac{M}{K},\eta_0)$-good by Lemma \ref{lemma: stability}. Let $\phi_E$ be the normalized eigenfunction of $E$ on $H_{\omega,\Lambda_{L_{k-1}}}$, Then
  \[
  |\phi_E(x)|\leq e^{-\frac{M}{K}L^{\rho^{n_x+j-1}}}\leq e^{-\frac{M}{K}L^{\rho^{2 N_2 -1}}}
  \]
  So we have
  \[
    \sum_{x\in\Theta_k^{ N_2 }}|\phi_E(x)|^2=1-\sum_{x\in\Lambda_{L_{k-1}}\setminus\Theta_k^{ N_2 }}|\phi_E(x)|^2\geq 1- CL^{\beta d}e^{-m'L^{\rho^{2 N_2 -1}}}\geq 1/2
  \]
  when $L$ is large enough. Notice that 
 \[
 \#B_{k-1}(\bfek)\sum_{x\in\Theta_k^{ N_2 }}|\phi_E(x)|^2\leq \Tr(\1_{\Theta_k^{ N_2 }}P_{\mathcal{I}}(H_{\omega,\Lambda_{L_{k-1}}}))\leq |\Theta_k^{ N_2 }|\leq L^{\beta d},
 \]
 where the first inequality follows from computing the trace with respect to the eigenbasis of $H_{\omega, \Lambda_{L_{k - 1}}}$:
 \[
 \begin{split}
    \Tr(\1_{\Theta_k^{ N_2 }}P_{\mathcal{I}}(H_{\omega,\Lambda_{L_{k-1}}})) &= \sum_{E\in \sigma^{(\mathcal I)}(H_{\omega, \Lambda_{L_{k - 1}}})} \langle \1_{\Theta_k^{N_2}} \phi_E, \phi_E\rangle\\
    &= \sum_{E\in \sigma^{(\mathcal I)}(H_{\omega, \Lambda_{L_{k - 1}}})} \Vert\1_{\Theta_k^{N_2}}\phi_E\Vert^2\\
    &\geq \#B_{k-1}(\bfek)\sum_{x\in\Theta_k^{ N_2 }}|\phi_E(x)|^2.
 \end{split}
 \]
 The second inequality follows from computing the trace with respect to $\delta_x$:
 \[
 \begin{split}
    \Tr(\1_{\Theta_k^{ N_2 }}P_{\mathcal{I}}(H_{\omega,\Lambda_{L_{k-1}}})) &=  \sum_{x} \langle \1_{\Theta_k^{N_2}} P_{\mathcal{I}}(H_{\omega,\Lambda_{L_{k-1}}}) \delta_x, \delta_x\rangle\\
    &\leq \sum_{x\in \Theta_k^{N_2}} \Vert \1_{\Theta_k^{N_2}} P_{\mathcal{I}}(H_{\omega,\Lambda_{L_{k-1}}})\Vert \cdot 1\leq |\Theta_k^{N_2}|. 
    \end{split}
 \]
 Thus $\#B_{k-1}(\bfek)\leq 2L^{\beta d}$. Using the recursion relation \eqref{eq-7f} from layer $L_{ N_2 }$ to layer $L_1$, we have
 \[
 \#D_0^{ N_2 }\leq CL_{ N_2 }^d(L^{\beta d})^{ N_2 }\leq CL^{( N_2 +1)\beta d}.
 \]
\end{proof}

\begin{proof}[Proof of Theorem \ref{thm: Key Theorem}]
    Recall that $\beta$, $\rho$, $N_1$, $N_2$, $M$ are given in \eqref{eq-7g}, \eqref{eq-7d}. Now given any $p<p'<p_0$, we can pick $b=1+\frac{1}{\beta}(p'-( N_2 +1)\beta)$, fix a $K \geq K_{d, b, p}$. Finally, set $c = \frac{15M}{K}$, $\mu = \frac{\beta}{2}$. 
    
    By Theorem \ref{side_2}, we get event $\mathcal S_{L, x_0}$. Meanwhile, set
    \begin{equation}
        \label{eq-7h}
    \mathcal R_{L, x_0}:= \bigcap_{E\in \sigma^{(\mathcal I, \red)}(H_{\omega, \Lambda})} \mathcal R_{L, x_0}^{(E)}
    \end{equation}
    where $\mathcal R_{L, x_0}^{(E)}$ is obtained from Proposition \ref{lemma: fixed_E0} with $E_0 = E$, $p' = p'$, $m = \frac{15M}{K}< m' = \frac{30M}{K}<m_0$. The set 
    \[
    \mathcal K_{L, x_0} = \mathcal S_{L, x_0} \cap \mathcal R_{L, x_0}
    \]
    suffices our needs. Indeed, 
    \[
    \PP(\mathcal K_{L, x_0}) \geq 1 - L^{-b\beta d} - L^{( N_2  + 1)\beta d} L^{-p'd} \geq 1 - L^{-pd}
    \]
    when $L$ is large enough. Furthermore, assume  $\omega \in \mathcal K_{L, x_0}$. Note that condition \eqref{Wmain} implies
    \[
    \Wxoe> e^{-cL^\mu}\geq e^{-\frac{15M}{K}L^{\frac{\beta}{2}}} \geq  e^{-30M \sqrt{\frac{L^\beta}{K}}},
    \]
    Since $\omega\in \mathcal S_{L, x_0}$, by Theorem \ref{side_2}, 
    \[
    \Wxoe> e^{-cL^\mu} \quad \Rightarrow \quad d(E, \sigma^{(\mathcal I, \red)}(H_{\omega, \Lambda_L(x_0)}))<e^{-\frac{30M}{K}L}.
    \]
    Since  $\omega\in \mathcal R_{L, x_0}$, by Proposition \ref{lemma: fixed_E0}, 
    \[
    d(E, \sigma^{(\mathcal I, \red)}(H_{\omega, \Lambda_L(x_0)}))<e^{-\frac{30M}{K}L} \quad \Rightarrow \quad \Wlxoe\leq e^{-\frac{15M}{K}L} = e^{-cL}.
    \]
    This completes the proof.
\end{proof}
\begin{remark}\label{rmk-7b}
    If $p_0>1$, then one only need the first spectral reduction but not the second. Basically take $\mathcal K_{L, x_0} = \mathcal O_{L, x_0} \cap \mathcal R_{L, x_0}$ is good enough, where $\mathcal O_{L, x_0}$ is obtained in Theorem \ref{side_1} and $\mathcal R_{L, x_0}$ is given above. 
\end{remark}

\section{Acknowledgments}
The authors would like to thank Abel Klein for explaining the subtleties of MSA arguments and providing invaluable commentary on this work and Svetlana Jitomirskaya for providing her encouragement, support, and insight throughout this project. They would also like to thank University of California, Irvine where most of the work was done when the authors were graduate students there. N.R. and X.Z. are partially supported by Simons 681675, NSF DMS-2052899 and DMS-2155211. X.Z. is also partially supported by NSF DMS-2054589 and Simons Foundation Targeted Grant (917524) to the Pacific Institute for the Mathematical Sciences.
\bibliographystyle{amsxport}
\bibliography{mybib1}
\end{document}